\def\pgf@circ@spst@path#1{\pgf@circ@bipole@path{spst}{#1}}
\tikzset{switch/.style = {\circuitikzbasekey, /tikz/to path=\pgf@circ@spst@path, l=#1}}
\tikzset{spst/.style = {switch = #1}}
\let\proof\@undefined                        % undefine \proof
\let\endproof\@undefined                  % undefine \endproof
\algnewcommand{\algorithmicgoto}{\textbf{go to}}%
\algnewcommand{\Goto}[1]{\algorithmicgoto~\ref{#1}}%
\algnewcommand{\LineComment}[1]{\Statex \(\triangleright\) #1}
\algnewcommand{\LineCommentN}[1]{\Statex \hspace{1cm}\(\triangleright\) #1}
\newtheorem{prop}{Proposition} % this could go into the preamble
\newtheorem{thm}{Theorem}
	\newtheorem{assumption}{Assumption}
\newtheorem{lem}{Lemma}
\newtheorem{defn}{Definition}
\newtheorem{problem}{Problem}
\let\oldbibliography\thebibliography
\renewcommand{\thebibliography}[1]{%
  \oldbibliography{#1}%
}
\newcommand{\tdiag}{\textstyle{\mathrm{diag}}}
\newcommand{\yong}[1]{{\color{black} #1}}
\newcommand{\moh}[1]{{\color{black} #1}}
\newcommand{\fa}[1]{{\color{black} #1}}
\newcommand{\mk}[1]{{\color{black} #1}}
\begin{document}

% paper title
\title{\LARGE \bf Optimal Feedback Stabilizing Control of Bounded Jacobian Discrete-Time Systems via Interval Observers} % Matrices} }

% You will get a Paper-ID when submitting a pdf file to the conference system
%\author{Author Names Omitted for Anonymous Review. Paper-ID Sze Zheng Yong}
\author{%
Mohammad Khajenejad\\
\thanks{%$^1$ The authors are with the Laboratory for Information and Decision Systems,
%Massachusetts Institute of Technology, Cambridge, MA, USA (e-mail: szyong@mit.edu, mhzhu@mit.edu, frazzoli@mit.edu).
M. Khajenejad is with the Departments of Mechanical Engineering and Electrical and Computer Engineering, University of Tulsa, Tulsa, OK, USA. (e-mail: mohammad-khajenejad@utulsa.edu.)}
%\thanks{This work is partially supported by National Science Foundation grants CNS-1932066 and CNS-1943545.}
%\vspace{-0.35cm}
}

\maketitle
\thispagestyle{empty}
\pagestyle{empty}

\begin{abstract}
This paper addresses optimal feedback stabilizing control for bounded Jacobian nonlinear discrete-time (DT) systems with nonlinear observations, affected by state and process noise. Instead of directly stabilizing the uncertain system, we propose stabilizing a higher-dimensional interval observer whose states enclose the true system states. Our nonlinear control approach introduces additional flexibility compared to linear methods, compensating for system nonlinearities and allowing potentially tighter closed-loop intervals. We also establish a separation principle, enabling independent design of observer and control gains, and derive tractable linear matrix inequalities, resulting in a stable closed-loop system.
 \end{abstract}

%\vspace{-0.125cm}
\section{Introduction} %\vspace{-0.05cm}

%\emph{Motivation.} 
In many real-world applications, solving estimation and control problems is complicated by the presence of significant nonlinearities and uncertainties in process models. For instance, parameter identification and system modeling in fields like systems biology and neuroscience are highly complex due to inherent nonlinear dynamics \cite{izhikevich2007dynamical}. Similarly, collaborative robotics, where multiple agents—including humans—interact, introduces considerable uncertainty, making accurate modeling and control challenging \cite{meurer2005control}. In such contexts, these uncertainties cannot be ignored and must be explicitly accounted for in the design process, necessitating adjustments to both control and estimation objectives. On the other hand, traditional control techniques often rely on precise system models, which can be difficult to obtain or maintain in practical applications due to noise, modeling errors, and external disturbances. This makes the design of estimation-based control approaches for nonlinear systems extremely important, particularly in cases where the system faces bounded uncertainties with unknown distributions or stochastic characteristics—rendering typical Kalman filter-based approaches inapplicable.

In such cases set-valued, and in particular interval-valued observers \cite{mazenc2011interval,wang2015interval,chebotarev2015interval,khajenejad2024distributed} have been leveraged to calculate a set (interval) of admissible values for the state at each instant of time using input-output information and the bounds on
uncertainty \cite{moisan2007near,raissi2010interval,tahir2021synthesis,khajenejad2021intervalACC,khajenejad2020simultaneousCDC}. The interval length is minimized by directly relying on monotone systems theory \cite{farina2000positive} to obtain cooperative observer error dynamics, by leveraging interval arithmetic or M\"{u}ller's theorem-based approaches \cite{kieffer2006guaranteed}, or by designing observer gains via solving linear, semidefinite or mixed integer programs~\cite{efimov2013interval,9790824,9803272}, making the gains proportional to the magnitudes of the model uncertainties.
%\emph{Literature review.} 
%The design of interval observers (a particular form of set-valued observers) has been extensively investigated in the literature for various classes of dynamical systems such as linear time-invariant (LTI) systems \cite{mazenc2011interval}, linear parameter varying (LPV)/quasi-LPV systems \cite{wang2015interval,chebotarev2015interval}, 
%monotone/cooperative dynamics \cite{moisan2007near,raissi2010interval}, Metzler systems \cite{mazenc2013robust} and mixed-monotone dynamics \cite{tahir2021synthesis,khajenejad2021intervalACC,khajenejad2020simultaneousCDC}. 

In the context of control design, and in particular for linear continuous-time (CT) systems, the work in \cite{wang2020intervalcontrol} proposed an interval observer-based state feedback controller for switched linear systems, while \cite{zhang2024semi} proposed a framework for semi-global interval observer-based robust control of linear time-invariant (LTI) systems subject to input saturation. 

Moreover, concerning nonlinear settings, the work in \cite{efimov2013control} introduced a dynamic output feedback approach based on an interval state observer design for a class of nonlinear continuous-time (CT) systems,  which was restricted to linear observations and did not consider measurement noise. Similarly, the research in \cite{he2018control} proposed a Luenberger-like interval observer methodology for nonlinear CT systems with linear observations and without taking to account sate and measurement noise. Recently, in the research in \cite{efimov2024interval} a nonlinear feedback control strategy was proposed for a class of nonlinear CT systems using interval estimates, where sufficient conditions for stability of estimation and regulation errors were formulated using linear matrix inequalities (LMI)s. However, non of these works considered nonlinear discrete-time (DT) systems with nonlinear observations and state and measurement noise. 

{\emph{{Contributions}.}} To bridge this gap, this paper considers optimal stabilizing feedback control of bonded Jacobian nonlinear DT systems with nonlinear observations which are subject to state and process noise. We show that instead of stabilizing the true uncertain system, we can stabilize a higher dimensional certain \emph{framer system}, i.e., an interval observer, whose states contain the states of the actual state from below and above.  our nonlinear control approach benefits from an additional degree of freedom  compared to the linear approach in \cite{efimov2013control}. This extra gain compensates for the potential dynamic instabilities that are due to the system nonlinearities and expands the feasible set of admissible control signals, potentially resulting in tighter closed-loop intervals. Moreover, we provide a \emph{separation principle} showing that the observer and control gains can be designed separately. This paves the way for the ability to obtain tractable LMIs to design control gains.
%propose a nonlinear feedback control strategy to stabilize the augmentation of the framer error dynamics and the closed-loop framer system. 
  
 \section{Preliminaries}
 
 {\emph{{Notation}.}} $\mathbb{R}^n,\mathbb{R}^{n  \times p},\mathbb{D}_n,\mathbb{N},\mathbb{N}_n$ denote the $n$-dimensional Euclidean space and the sets of $n$ by $p$ matrices, $n$ by $n$ diagonal matrices, natural numbers and natural numbers up to $n$, respectively. 
 %while $\mathbb{M}_n$ %and $\tilde{\mathbb{M}}_n$ 
 %denote\yong{s} the set of all $n$ by $n$ Metzler\footnote{A Metzler matrix is a square matrix in which all the off-diagonal components are nonnegative (equal to or greater than zero).} \mo{matrices}. %and M-matrices\footnote{An M-matrix is a square matrix in which all the off-diagonal components are nonpositive (equal to or lesser than zero).}, respectively. 
 For $M \in \mathbb{R}^{n \times p}$, $M_{i,j}$ denotes $M$'s entry in the $i$'th row and the $j$'th column, $M^\oplus\triangleq \max(M,\mathbf{0}_{n \times p})$, $M^\ominus=M^\oplus-M$ and $|M|\triangleq M^\oplus+M^\ominus$, where $\mathbf{0}_{n \times p}$ is the zero matrix in $\mathbb{R}^{n \times p}$ and $\mathbf{0}_{n}$ is the zero vector in $\mathbb{R}^n$, \yong{while $\textstyle{\mathrm{sgn}}(M) \in \mathbb{R}^{n \times p}$ is the element-wise sign of $M$ with $\textstyle{\mathrm{sgn}}(M_{i,j})=1$ if $M_{i,j} \geq 0$ and $\textstyle{\mathrm{sgn}}(M_{i,j})=-1$, otherwise.} %Furthermore, $\textstyle{\mathrm{sgn}}(M) \in \mathbb{R}^{n \times p}$, where $\textstyle{\mathrm{sgn}}(M)_{i,j}=1$ if $M_{i,j} \geq 0$ and $\textstyle{\mathrm{sgn}}(M)_{i,j}=0$, otherwise. 
 Furthermore, if $p=n$, 
 %$M^\text{d}$ denotes a diagonal matrix whose diagonal coincides with the diagonal of $M$, $M^\text{nd} \triangleq M-M^\text{d}$ and $M^{\text{m}} \triangleq M^\text{d}+|M^\text{nd}|$,
 \mk{$M \succ \mathbf{0}_{n \times n}$ and $M \prec \mathbf{0}_{n \times n}$ (or $M \succeq \mathbf{0}_{n \times n}$ and $M \preceq \mathbf{0}_{n \times n}$) denote that $M$ is positive and negative (semi-)definite, respectively}. Finally, $\tdiag(A_1,\dots, A_n)$ denotes a block diagonal matrix with $A_1, \dots,A_n$ being its diagonal block matrix entries, while $\tdiag_n(A)\triangleq \tdiag(\underbrace{A,\dots,A}_{n \ \text{times}})$. Finally $I_n$ denotes the identity matrix in $\mathbb{R}^n$. %is a block diagonal matrix whose diagonal is  
%\emph{Notation.} %We first summarize some notations used throughout the paper. 

Next, we introduce some useful definitions and results.
\begin{defn}[Interval, Maximal and Minimal Elements, Interval Width]\label{defn:interval}
{An (multi-dimensional) interval {$\mathcal{I} \triangleq [\underline{s},\overline{s}]  \subset %%\in 
\mathbb{R}^n$} is the set of all real vectors $x \in \mathbb{R}^n$ that satisfies $\underline{s} \le x \le \overline{s}$, where $\underline{s}$, $\overline{s}$ and $\|\overline{s}-\underline{s}\|\mk{_{\infty}\triangleq \max_{i \in \{1,\cdots,n\}}s_i}$ are called minimal vector, maximal vector and \mk{interval} width of $\mathcal{I}$, respectively}. \yong{An interval matrix can be defined similarly.} %, in an element-wise manner.}
\end{defn}
\begin{prop}\cite[Lemma 1]{efimov2013interval}\label{prop:bounding}
%Suppose , where . 
Let $A \in \mathbb{R}^{n \times p}$ and $\underline{x} \leq x \leq \overline{x} \in \mathbb{R}^n$. Then\moh{,} $A^+\underline{x}-A^{-}\overline{x} \leq Ax \leq A^+\overline{x}-A^{-}\underline{x}$. As a corollary, if $A$ is non-negative, $A\underline{x} \leq Ax \leq A\overline{x}$. % where $A^+,A^{-} \in \mathbb{R}^{m \times n}$, $A^+_{i,j}=A_{i,j}$ if $A_{i,j} \geq 0$, $A^+_{i,j}=0$ if $A_{i,j} < 0$ and $A^{-}=A^+-A$.
\end{prop}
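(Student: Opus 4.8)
The plan is to reduce the claim to the elementary fact that left-multiplication by an entrywise nonnegative matrix preserves the componentwise partial order on $\mathbb{R}^n$, and then apply that fact to the nonnegative and nonpositive parts of $A$ separately, exploiting the decomposition $A = A^+ - A^-$.

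First I would record the base claim: if $N \in \mathbb{R}^{n \times p}$ satisfies $N \ge \mathbf{0}_{n \times p}$ entrywise and $\underline{x} \le x$, then $N\underline{x} \le Nx$. This is immediate by looking at the $i$-th component, $(Nx)_i - (N\underline{x})_i = \sum_{j=1}^{p} N_{i,j}\,(x_j - \underline{x}_j)$, where every term is nonnegative since $N_{i,j} \ge 0$ and $x_j - \underline{x}_j \ge 0$; hence $(Nx)_i \ge (N\underline{x})_i$ for all $i$. The symmetric argument with $x \le \overline{x}$ gives $Nx \le N\overline{x}$. This base claim already settles the corollary: when $A$ is nonnegative it plays the role of $N$, so $A\underline{x} \le Ax \le A\overline{x}$.

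For the general bound I would use $A = A^+ - A^-$ with $A^+ \triangleq \max(A,\mathbf{0}_{n \times p}) \ge \mathbf{0}_{n \times p}$ and $A^- \triangleq A^+ - A \ge \mathbf{0}_{n \times p}$, both nonnegative by construction. Applying the base claim to $A^+$ yields $A^+\underline{x} \le A^+ x \le A^+\overline{x}$; applying it to $A^-$ and negating yields $-A^-\overline{x} \le -A^- x \le -A^-\underline{x}$. Adding these two chains of inequalities componentwise (valid since the componentwise order is closed under vector addition) gives $A^+\underline{x} - A^-\overline{x} \le A^+ x - A^- x \le A^+\overline{x} - A^-\underline{x}$, and since $A^+ x - A^- x = (A^+ - A^-)x = Ax$, this is exactly the asserted two-sided bound.

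There is no genuine obstacle here; the only points needing a little care are that all inequalities are meant entrywise and that the final additivity step is legitimate precisely because the componentwise order is preserved under addition. Everything else is bookkeeping with the $(\cdot)^+/(\cdot)^-$ split, which is why the result is merely quoted from \cite{efimov2013interval} rather than reproved in detail.
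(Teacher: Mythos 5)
Your proof is correct: the split $A = A^{+}-A^{-}$ into nonnegative parts, the componentwise order-preservation argument, and the addition of the two inequality chains is exactly the standard argument for this fact, and the corollary follows as you note. The paper itself gives no proof (it simply cites Lemma~1 of the referenced work), and your write-up matches the classical proof of that cited lemma, so there is nothing to flag.
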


%\begin{lem} \label{lem:tightness}
%Suppose the assumptions in Proposition \ref{prop:bounding} hold. Then, the returned bounds for $Ax$ is tight, in the sense that $\sup\limits_{\underline{x} \leq x \leq \overline{x}}Ax=A^+\overline{x}-A^{-}\underline{x}$ and $\inf\limits_{\underline{x} \leq x \leq \overline{x}}Ax=A^+\underline{x}-A^{-}\overline{x}$, where $\sup$ and $\inf$ are considered element-wise.
%\end{lem}
%\begin{proof}
%For $j \hspace{-.1cm}\in \hspace{-.1cm} \{1\dots m\}$, consider the problem of $\overline{s}_j=\max\limits_{\underline{x} \leq x \leq \overline{x}}[Ax]_j$, where $[Ax]_j = \sum_{i=1}^n A_{j,i}x_i$ is the $j$-th argument of the vector $Ax$. Obviously, the solutions of this program are $x^*_i=\overline{x}_i$ if $A_{j,i} \geq 0$, and $x^*_i=-\underline{x}_i$ if $A_{j,i} < 0, \forall i \in \{1\dots n \}$. Hence $\overline{s}_j =[A]^+_j\overline{x}-[A]^{-}_j \underline{x}$, where $[A]_j$ is the $j$-th row of $A$. Similarly, $\underline{s}_j=\min_{\underline{x} \leq x \leq \overline{x}}[Ax]_j=[A]^+_j\underline{x}-[A]^{-}_j \overline{x}$. The proof is complete, since %the facts 
 %$\sup_{\underline{x} \leq x \leq \overline{x}}Ax=[ \overline{s}_1  \dots \overline{s}_m ]^\top$ (similar for $\inf$).
%\end{proof}
\begin{defn}[Jacobian Sign-Stability] \label{defn:JSS}
A mapping $f :\mathcal{X} \subset \mathbb{R}^{n} \to  \mathbb{R}^{p}$ is (generalized) Jacobian sign-stable (JSS), if its (generalized) Jacobian matrix entries \fa{do} not change signs on its domain, i.e., if either of the following hold: 
\begin{align*}
&\forall x \in \mathcal{X}, \forall i \in \mathbb{N}_p,\forall j \in \mathbb{N}_n , J_f(x)_{i,j} \geq 0 \ \text{(positive JSS)}\\  
&\forall x \in \mathcal{X}, \forall i \in \mathbb{N}_p,\forall j \in \mathbb{N}_n , J_f(x)_{i,j} \leq 0 \  \text{(negative JSS)},
\end{align*} 
where $J_f(x)$ denotes the Jacobian matrix of $f$ at $x \in \mathcal{X}$. 
\end{defn}
\begin{prop}[Mixed-Monotone Decomposition]
\cite[Proposition 2]{9867741}\label{prop:JSS_decomp}
Let $f :\mathcal{X} \subset \mathbb{R}^{n} \to  \mathbb{R}^{p}$ and suppose $\forall x \in \mathcal{X}, J_f(x) \in [\underline{J}_f,\overline{J}_f]$, where $\underline{J}_f,\overline{J}_f$ are known matrices in $\mathbb{R}^{p \times n}$. Then, $f$ can be decomposed into a (remainder) affine mapping $Hx$ and a JSS mapping $\mu (\cdot)$, in an additive form: 
\begin{align}\label{eq:JSS_decomp}
\forall x \in \mathcal{X},f(x)=\mu(x)+Hx,
\end{align}
 where $H$ is a matrix in $\mathbb{R}^{p \times n}$, that satisfies the following 
 \begin{align}\label{eq:H_decomp}
 \forall (i,j) \in \mathbb{N}_p \times \mathbb{N}_n, H_{i,j}=(\overline{J}_f)_{i,j} \ \lor \ H_{i,j}=(\underline{J}_f)_{i,j}.    
 \end{align}
\end{prop}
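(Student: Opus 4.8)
The plan is to make the decomposition fully constructive: I would pick an admissible $H$ and set $\mu(x) \triangleq f(x) - Hx$, so that \eqref{eq:JSS_decomp} holds by definition and the entire claim collapses to checking that this $\mu$ is JSS for that particular $H$. Since $x \mapsto Hx$ is affine with constant Jacobian $H$, linearity of differentiation gives $J_\mu(x) = J_f(x) - H$ for every $x \in \mathcal{X}$; in the nonsmooth case the Clarke generalized-Jacobian sum rule yields the analogous inclusion $J_\mu(x) \in [\underline{J}_f - H,\, \overline{J}_f - H]$ for all $x \in \mathcal{X}$, because the generalized Jacobian of $x \mapsto Hx$ is the singleton $\{H\}$.

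The key step is the choice of $H$. Taking $H_{i,j} = (\underline{J}_f)_{i,j}$ for every $(i,j) \in \mathbb{N}_p \times \mathbb{N}_n$ satisfies \eqref{eq:H_decomp}, and the hypothesis $J_f(x) \in [\underline{J}_f,\overline{J}_f]$ then forces $J_\mu(x)_{i,j} = J_f(x)_{i,j} - (\underline{J}_f)_{i,j} \ge 0$ for all $x \in \mathcal{X}$ and all $(i,j)$, so $\mu$ is positive JSS. Symmetrically, the choice $H = \overline{J}_f$ renders $J_f(x) - \overline{J}_f$ entrywise nonpositive, so $\mu$ is negative JSS. More generally, selecting $H_{i,j} \in \{(\underline{J}_f)_{i,j}, (\overline{J}_f)_{i,j}\}$ independently for each $(i,j)$ freezes the sign of the $(i,j)$ entry of $J_\mu(\cdot)$ across $\mathcal{X}$, which is exactly the freedom recorded in \eqref{eq:H_decomp}.

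Because the argument is purely constructive, I do not expect a genuine obstacle; the only points that will need care are (i) matching the sign of the retained bound ($\underline{J}_f$ versus $\overline{J}_f$) to the positive/negative JSS convention of Definition \ref{defn:JSS}, and (ii) in the nondifferentiable setting, replacing ordinary Jacobian calculus by the generalized-Jacobian sum rule to justify $J_\mu(x) = J_f(x) - H$. No mean value theorem is needed at this stage; that enters only when the decomposition \eqref{eq:JSS_decomp} is subsequently combined with Proposition \ref{prop:bounding} to build the interval observer.
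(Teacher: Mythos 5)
Your construction---fixing $H$ entrywise from the Jacobian bounds and setting $\mu(x)\triangleq f(x)-Hx$, so that $J_\mu(x)=J_f(x)-H\in[\underline{J}_f-H,\overline{J}_f-H]$ has entrywise-constant sign on $\mathcal{X}$---is correct and is essentially the argument behind the cited result \cite[Proposition 2]{9867741}; the present paper imports Proposition \ref{prop:JSS_decomp} without reproving it, so there is no divergence to report. The only point worth flagging is that an arbitrary entrywise mixture of $(\underline{J}_f)_{i,j}$ and $(\overline{J}_f)_{i,j}$ gives sign-stability entry by entry, which is the JSS notion intended by \eqref{eq:H_decomp} and by the reference, rather than the all-nonnegative/all-nonpositive reading that a literal parse of Definition \ref{defn:JSS} might suggest.
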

\begin{prop}[Tight and Tractable Decomposition Functions for JSS Mappings {\cite[Proposition 4 \& Lemma 3]{9867741}}]\label{prop:tight_decomp}
Let $\mu:{\mathcal{Z}} \subset \mathbb{R}^{n_z} \to \mathbb{R}^p$ be a JSS mapping on its domain. Then, it admits a tight decomposition function for each $\mu_i,\ i \in \mathbb{N}_p$ as follows: %that has 
% with the following form: %ordered pair $x_2 \leq x_1$: %$\lambda_d$ (cf. Proposition \ref{prop:mm_dec}) for any ordered pair $\underline{x}, \overline{x} \in \mathcal{X}, \underline{x} \leq \overline{x}$, can be tightly computed as follows:
\begin{align}\label{eq:JJ_decomp}
%\forall i \in \mathbb{N}_p, 
\mu_{d,i}({z}_1,{z}_2)\hspace{-.1cm}=\hspace{-.1cm}\mu_i(D^i{z}_1\hspace{-.1cm}+\hspace{-.1cm}(I_{n_z}\hspace{-.1cm}-\hspace{-.1cm}D^i){z}_2), 
\end{align}
{for any ordered ${z_1, z_2 \in \mathcal{Z}}$}, where $D^i$ %\in \mathbb{D}_\md{n_z}$ 
is a binary diagonal matrix determined by which vertex of the interval {$[{z}_2,{z}_1]$ {or $[z_1,z_2]$}} that maximizes {(if $z_2 \leq z_1$) or minimizes {(if $z_2 > z_1$)}} the function 
$\mu_i$ that can be found in closed-form as: %and \md{is computed} as follows:
\begin{align}\label{eq:Dj}
D^i=\textstyle{\mathrm{diag}}(\max(\textstyle{\mathrm{sgn}}(\yong{\overline{J}^{\mu}_i}),\mathbf{0}_{1,{n_z}})).
\end{align}
Furthermore, for any interval domain $[\underline{z} \ \overline{z}]\subseteq \mathcal{Z}$, the following inequality holds:
\begin{align}\label{eq:mm_bounding}
\mu_d(\overline{z},\underline{z})-\mu_d(\underline{z},\overline{z})\leq F_{\mu}(\overline{z}-\underline{z}), 
\end{align}
where $\mu_d=[\mu_{d,1},\dots, \mu_{d,p}]^\top$ and 
\begin{align}\label{eq:F_bounding}
F_{\mu}=\overline{J}^\oplus_{\mu}+\underline{J}^\ominus_{\mu}.
\end{align}
Finally, if $\mu$ is computed through the mixed-monotone decomposition in Proposition \ref{prop:JSS_decomp}, then
\begin{align}\label{eq:JSS_Jacobian}
\overline{J}^\mu=\overline{J}^f-H \ \text{and} \ \underline{J}^\mu=\underline{J}^f-H.
\end{align}  
\end{prop}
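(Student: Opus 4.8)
Three things must be shown: that \eqref{eq:JJ_decomp} with $D^i$ as in \eqref{eq:Dj} is a tight decomposition function of each $\mu_i$, that the width bound \eqref{eq:mm_bounding} holds on every interval sub-domain, and that \eqref{eq:JSS_Jacobian} holds when $\mu$ comes from Proposition~\ref{prop:JSS_decomp}. The common engine is that Jacobian sign-stability forces each scalar component $\mu_i$ to be monotone---with a coordinate-wise fixed sign pattern---along every axis, together with convexity of interval sub-domains.

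\emph{Tight decomposition function.} I would first recall that a map $g(z_1,z_2)$ is a decomposition function for $\mu_i$ when it is consistent ($g(z,z)=\mu_i(z)$), nondecreasing in $z_1$, nonincreasing in $z_2$, and, for $z_2\le z_1$, satisfies $g(z_2,z_1)\le\mu_i(x)\le g(z_1,z_2)$ for all $x\in[z_2,z_1]$; it is tight when moreover $g(z_1,z_2)=\max_{x\in[z_2,z_1]}\mu_i(x)$ and $g(z_2,z_1)=\min_{x\in[z_2,z_1]}\mu_i(x)$. Consistency of $\mu_{d,i}$ in \eqref{eq:JJ_decomp} is immediate because $D^iz+(I_{n_z}-D^i)z=z$. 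Next, by Definition~\ref{defn:JSS} coordinate $j$ is either positive JSS for $\mu_i$, so $\partial\mu_i/\partial z_j\ge0$ throughout and $\mathrm{sgn}(\overline{J}^\mu_{i,j})=1$, or negative JSS, so $\partial\mu_i/\partial z_j\le0$ throughout; thus \eqref{eq:Dj} sets $D^i_{jj}=1$ in the first case and $D^i_{jj}=0$ in the second. Hence the maximum of the coordinate-monotone $\mu_i$ over any box $[\underline{z},\overline{z}]\subseteq\mathcal{Z}$ is attained at the vertex equal to $\overline{z}_j$ on nondecreasing coordinates and to $\underline{z}_j$ on the others---that is, at $D^i\overline{z}+(I_{n_z}-D^i)\underline{z}$---and its minimum at $D^i\underline{z}+(I_{n_z}-D^i)\overline{z}$; comparing with \eqref{eq:JJ_decomp} in the orderings $z_2\le z_1$ and $z_2>z_1$ identifies $\mu_{d,i}$ as a tight decomposition function, and monotonicity in $z_1$ (resp. $z_2$) follows since raising $z_1$ (resp. lowering $z_2$) moves the argument of $\mu_i$ along directions on which $\mu_i$ is nondecreasing.

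\emph{Width bound and Jacobian identities.} Fixing $i$, I would put $a\triangleq D^i\overline{z}+(I_{n_z}-D^i)\underline{z}$ and $b\triangleq D^i\underline{z}+(I_{n_z}-D^i)\overline{z}$, so that $\mu_{d,i}(\overline{z},\underline{z})-\mu_{d,i}(\underline{z},\overline{z})=\mu_i(a)-\mu_i(b)$ and a short computation gives $a-b=(2D^i-I_{n_z})(\overline{z}-\underline{z})$. As $a$, $b$, and the segment between them lie in the convex box $[\underline{z},\overline{z}]\subseteq\mathcal{Z}$, the mean value theorem produces $\xi$ on that segment with $\mu_i(a)-\mu_i(b)=J_{\mu_i}(\xi)(2D^i-I_{n_z})(\overline{z}-\underline{z})$, where $J_{\mu_i}(\xi)$ is the Jacobian row of $\mu_i$ at $\xi$; since $\overline{z}-\underline{z}\ge\mathbf{0}_{n_z}$, it suffices to bound the row $J_{\mu_i}(\xi)(2D^i-I_{n_z})$ entrywise from above. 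On a coordinate $j$ with $D^i_{jj}=1$ its entry equals $J_{\mu_i}(\xi)_j\le\overline{J}^\mu_{i,j}\le(\overline{J}^\mu)^\oplus_{i,j}+(\underline{J}^\mu)^\ominus_{i,j}$; on one with $D^i_{jj}=0$ it equals $-J_{\mu_i}(\xi)_j\le-\underline{J}^\mu_{i,j}\le(\overline{J}^\mu)^\oplus_{i,j}+(\underline{J}^\mu)^\ominus_{i,j}$; either way it is at most the $(i,j)$ entry of $F_\mu$ in \eqref{eq:F_bounding}, with equality on sign-definite coordinates, and stacking the $p$ scalar inequalities gives \eqref{eq:mm_bounding}. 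Finally, if $\mu$ arises from Proposition~\ref{prop:JSS_decomp} then $f(x)=\mu(x)+Hx$ with $H$ constant, so $J_\mu(x)=J_f(x)-H$ pointwise; taking entrywise infima and suprema over $\mathcal{X}$ and using that $H$ does not depend on $x$ yields \eqref{eq:JSS_Jacobian}, which, together with \eqref{eq:H_decomp}, also re-confirms that every entry of $J_\mu$ is sign-definite---i.e. $\mu$ is JSS---justifying the earlier two steps.

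\emph{Main obstacle.} The algebra (the identity $a-b=(2D^i-I_{n_z})(\overline{z}-\underline{z})$ and the $\oplus/\ominus$ bookkeeping) and the mean value theorem step are routine. The only part needing real care is the extremal-vertex characterization behind \eqref{eq:JJ_decomp}--\eqref{eq:Dj}: one must verify that \eqref{eq:Dj}, with the convention $\mathrm{sgn}(0)=1$, always selects a maximizer (resp. minimizer) of the coordinate-monotone $\mu_i$ over the box---including the borderline case where a Jacobian-bound entry vanishes---and, if $\mathcal{Z}$ is not already an interval, that the argument $D^iz_1+(I_{n_z}-D^i)z_2$ stays inside $\mathcal{Z}$; restricting to interval sub-domains, as in every later use, makes the latter automatic.
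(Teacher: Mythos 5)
The paper gives no proof of this proposition: it is imported verbatim from \cite{9867741}, so there is nothing in-paper to compare against, and your reconstruction---coordinate-wise monotonicity forced by the JSS property, vertex selection via $D^i$, the mean-value-theorem bound yielding $F_\mu$ in \eqref{eq:F_bounding}, and $J_\mu(x)=J_f(x)-H$ for the final identity---is correct and is essentially the argument of the cited source. The one subtlety you flag is real but is an artifact of this paper's notation rather than a gap in your proof: with the convention $\textstyle{\mathrm{sgn}}(0)=+1$ stated in the Preliminaries, \eqref{eq:Dj} would set $D^i_{jj}=1$ in the borderline case $\overline{J}^{\mu}_{i,j}=0$ with $\underline{J}^{\mu}_{i,j}<0$ (which is exactly what the remainder decomposition with $H_{i,j}=(\overline{J}_f)_{i,j}$ produces), whereas the correct vertex choice there is $D^i_{jj}=0$; under the standard convention $\textstyle{\mathrm{sgn}}(0)=0$ used in the source, \eqref{eq:Dj} selects the maximizing/minimizing vertex in every case and your argument goes through unchanged.
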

\section{Problem Formulation} \label{sec:Problem}
%\vspace{-0.1cm}
\noindent\textbf{\emph{System Assumptions.}} 
Consider the following nonlinear discrete-time (DT) system:  
\begin{align} \label{eq:system}
\begin{array}{ll}
\mathcal{G}: \begin{cases} {x}_{k+1} &= f(x_k)+Bu_k+Ww_k    \\
                                              y_k &= g(x_k)+Du_k+Vv_k
                                              %\\
%                                              \  y_c\in g_c(x,u_c,\mathcal{V}), & x \in C, \\ 
%                                              \ y_d \in g_d(x,u_d,\mathcal{V}), & x \in D, 
                                              \end{cases},
\end{array}\hspace{-0.2cm}
\end{align}
%where $x_t^+=\dot{x}_t, \yong{\mathbb{T}} = \mathbb{R}_{\ge 0}$ if $\mathcal{G}$ is a CT and $x_t^+=x_{t+1}, \yong{\mathbb{T}}= \{0\}\cup \mathbb{N}$, if $\mathcal{G}$ is a DT system. 
where $x_k \in \mathcal{X} \subset \mathbb{R}^n$, $u_k \in \mathbb{R}^m$, $y_k \in \mathbb{R}^l$, $w_k \in \mathcal{W} \triangleq [\underline{w},\overline{w}] \subset \mathbb{R}^{n_w}$ and $v_k \in \mathcal{V} \triangleq [\underline{v},\overline{v}] \subset \mathbb{R}^{n_v}$ are continuous state, control input, output (measurement), process disturbance and measurement noise signals, respectively. Furthermore, ${f}:\mathcal{X}  \to \mathbb{R}^n$ and ${g}:\mathcal{X}  \to \mathbb{R}^l$  are known, nonlinear vector fields, and $B \in \mathbb{R}^{n} \times \mathbb{R}^{n}, W \in \mathbb{R}^{n} \times\mathbb{R}^{n_w},D \in \mathbb{R}^{l} \times\mathbb{R}^{b}$ and $V \in \mathbb{R}^{l} \times\mathbb{R}^{n_v}$ are known matrices. 
\begin{assumption}\label{ass:mixed_monotonicity}
 The mappings $f$ and $g$ are known, differentiable and have bounded Jacobians, i.e., satisfy
 \begin{align*}%\label{eq:jac_bounds}
 \underline{J}_{f} \leq J_{\nu}(x) \leq \overline{J}_{\nu}, \ \forall \nu \in \{f,g\}, \ \forall x \in \mathcal{X},
 \end{align*}
 where $J_f$ and $J_g$ are the Jacobian matrix functions and $\overline{J}_{f},\underline{J}_{f} \in \mathbb{R}^{n \times n}, \overline{J}_{g},\underline{J}_{g} \in \mathbb{R}^{l \times n}$ are  priori known matrices.
\end{assumption}
Based on Assumption \ref{ass:mixed_monotonicity} and by leveraging Proposition \ref{prop:JSS_decomp}, the matrices $A \in \mathbb{R}^{n \times n}$ and $C \in \mathbb{R}^{l \times n}$ are chosen such that the following decompositions hold (cf. Definition \ref{defn:JSS} for JSS mappings):
\begin{align} \label{eq:JSS_decom}
%\begin{array}{lr}
\hspace{-.2cm}\forall x \in \mathcal{X}: \begin{cases}f(x)=Ax+\phi(x) \\ h(x)=Cx+\psi(x) \end{cases} \hspace{-.4cm} s.t. \ \phi,\psi \ \text{are JSS in} \ \mathcal{X}.
%\end{array}
\end{align} 
Recall that according to Proposition \ref{prop:tight_decomp}, the JSS mappings $\phi$ and $\psi$ admit matrices $F_{\phi}$ and $F_{\psi}$ that can be computed via \eqref{eq:F_bounding} and satisfy \eqref{eq:mm_bounding}. Without loss of generality, we assume $F_{\phi}$ is invertible. The reason is that by construction, $F_{\phi}$ is a non-negative matrix (cf. \eqref{eq:F_bounding}). So, even if the initially computed $F_{\phi}$ is not invertible, one can increase its diagonal elements until the modified $F_{\phi}$ becomes a diagonally dominant, and so invertible matrix~\cite{sootla2017block}. Furthermore, since the modified $F_{\phi}$ is not less than the initial one, it still satisifes \eqref{eq:mm_bounding} given the non-negativity of all items in both sides of the inequality.

We also assume the following.
%Based on Assumption \ref{} and via applyin Proposition \ref{}, the a mapping $f$ can be decomposed as $f(x)=Ax+\phi(x), \forall x \in \mathcal{X}$, where $\phi$ is a JSS mapping.
\begin{assumption} \label{ass:initial_interval}
 The initial state $x_0$ satisfies $x_0 \in \mathcal{X}_0 = [ \underline{x}_0,\overline{x}_0]$, where $\underline{x}_0$ and $\overline{x}_0$ {are} known initial state bounds, and the values of the output/measurement $y_k$ signals are known at all times.  
 \end{assumption} 
 Our goal is to design the control input signal $u_k$ as a (implicit) function of system measurements/outputs to stabilize the closed-loop DT nonlinear system \eqref{eq:system}--\eqref{eq:JSS_decom}. Due to the existing uncertainties, we do not assume that the state is measurable (even the initial state is uncertain), and we only measure the output $y_k$.
\begin{problem}\label{eq:stab_synth}
Given the nonlinear system in \eqref{eq:system}--\eqref{eq:JSS_decom}, as well as Assumptions \ref{ass:mixed_monotonicity} and \ref{ass:initial_interval}, 
synthesize a stabilizing feedback control input $u_k$ for the plant $\mathcal{G}$ so that the resulting closed-loop system is asymptotically stable. 
\end{problem}
\section{Proposed Control Design}
Before discussing our control design strategy, we first briefly introduce the notions of framer and interval observer which we extensively use throughout the paper.
\begin{defn}[Correct Interval \mk{Framers}]\label{defn:framers}
%Suppose Assumptions \ref{ass:initial_interval}--\ref{ass:known_input_output} hold. 
Given the nonlinear plant $\mathcal{G}$, %with a time domain $E \triangleq \bigcup_{j=0}^{J-1}([t_j,t_{j+1}],j)$, 
the sequences of signals $\{\overline{x}_k,\underline{x}{k}\}_{k=0}^{\infty}$ in $\mathbb{R}^n$ are called upper and lower framers for the states of \eqref{eq:system}, if 
\begin{align}\label{eq:correctness}
\forall k \in [0,\infty), \ \underline{x}_k \leq x_k \leq \overline{x}_k.
\end{align}
% where $T \triangleq \{0\} \cup \mathbb{N}$ and $x(t) \triangleq x_t$ if $\mathcal{G}$ is a discrete-time system and $E\triangleq \mathbb{R}_+$ if $\mathcal{G}$ is a continuous-time system. 
In other words, starting from the initial interval $\underline{x}_0 \leq x_0 \leq \overline{x}_0$, the true state of the system in \eqref{eq:system}, $x_k$, is guaranteed to evolve within the interval flow-pipe $[\underline{x}_k,\overline{x}_k]$, for all $k \geq 0$. Finally, any dynamical system $\hat{\mathcal{G}}$ whose states are correct framers for the states of the plant $\mathcal{G}$, i.e., any (tractable) algorithm that returns upper and lower framers for the states of plant $\mathcal{G}$ is called a \emph{correct} interval \mk{framer} system for \eqref{eq:system}. 
\end{defn}
\begin{defn}[\mk{Framer} Error]\label{defn:error}
Given %a CHIO returning 
state framers \mk{$\underline{x}_k \leq \overline{x}_k, k \geq 0$}, $\varepsilon_k \triangleq \overline{x}_k-\underline{x}_k$, \mk{whose infinite norm denotes} the interval width \mk{of $[\underline{x}_k,\overline{x}_k]$ (cf. Definition \ref{defn:interval})},  
% defined as $\varepsilon_t \triangleq \overline{x}_t-\underline{x}_t$, 
 is called the \mk{framer} error. It can be easily verified that %with defining $\varepsilon_0 \triangleq \overline{x}_0-\underline{x}_0$, 
correctness (cf. Definition \ref{defn:framers}) implies that $\varepsilon_k \geq 0, \forall k \geq 0.$  
\end{defn}
\begin{defn}[Stability and \mk{Interval Observer}]\label{defn:stability}
A corresponding interval \mk{framer} to system \eqref{eq:system} is input-to-state stable (ISS), if the \mk{framer} error sequence $\{\varepsilon_{k}\}_{k=0}^{\infty}$ is bounded as follows: 
\begin{align}
\|\varepsilon_{k}\|_2 \leq \beta(\|\varepsilon_0\|,k)+\rho(\|\delta\|_{\ell_{\infty}}), \forall k \geq 0,
\end{align}
where $\delta \triangleq [\delta^\top_w \delta^\top_v]^\top$, $\beta$ and $\rho$ are functions of classes $\mathcal{KL}$ and $\mathcal{K}_{\infty}$, respectively, and
$\|\delta\|_{\ell_{\infty}}=\sup_{k \geq 0} \|\delta_k\|_2=\|\delta\|_2$ is the $\ell_{\infty}$ signal norm.
An ISS interval framer is called an interval observer.
%\begin{align}
%\forall (t,j) \in E, \|e(t,j)\| \leq \beta(\|\varepsilon_0\|,t+j)+\rho(\Delta),
%$\lim_{t \to \infty} \|\varepsilon _t\|=0$. \mk{A stable interval framer is called an interval observer.}
%\end{align}
%where $\|\varepsilon _t\|_{\tilde{p}}$ is $ \{1,2,\infty,\dots\}\ni \tilde{p}$-norm of the signal/mapping $\varepsilon _t$.
% for all $(t, j) \in \dom e$, 
%where $\Delta \triangleq \max\{\|\Delta w \|_{\infty},\|\Delta y_c \|_{\infty},\|\Delta y_d \|_{\infty}\}$, $\beta \in \mathcal{KL}$ and $\rho \in \mathcal{K}_{\infty}$.
\end{defn}
Our proposed control strategy has the following steps:
\begin{enumerate}[(i)]
    \item Since the actual state of the system in \eqref{eq:system}--\eqref{eq:JSS_decom} is uncertain and not measured, we first construct a corresponding certain and robust \emph{framer system}, that given the output (measurement) signal and by construction, returns upper and lower interval-valued estimates of the state of the actual system, given \emph{any} control signal $u_k$, any to-be-designed observer gain, and for any realization of the bounded noise and disturbance signals.\label{item:framer}
    \item Then, we consider a state feedback control for the framer system constructed in \eqref{item:framer} using to-be-designed control gains. The goal is to stabilize the framer system which results in a stable actual system.\label{item:control}
    \item Finally, by plugging the controls from \eqref{item:control} into the framer system in \eqref{item:framer}, we obtain the closed-loop framer system. We stabilize the closed-loop system by designing the observer and controller gains, which we show that can be done separately. This results in an input-to-state stable closed-loop system for the actual plant.
\end{enumerate}

%Further, we formally define the notions of \emph{framers}, \emph{correctness} and \emph{stability} that are used throughout the paper. 

%The %Simultaneous Input and State Interval Observer (SISIO) 
%observer design problem %, addressed in this paper, 
% can be stated as follows:%\newline
%\begin{problem}\label{prob:SISIO}
%Given the nonlinear system in \eqref{eq:system}, as well as Assumptions \ref{ass:initial_interval}--\ref{ass:known_input_output}, %and Definitions \ref{defn:hybrid_framers}--\ref{defn:stability}, 
%synthesize an interval observer, \mk{i.e., a} correct and stable \mk{framer} (cf. Definitions \ref{defn:framers} and \ref{defn:stability}). 
%\end{problem}
%\section{Proposed Interval Observer} \label{sec:observer}
\subsection{Interval Framers} \label{sec:obsv}
Given the nonlinear plant $\mathcal{G}$, in order to address Problem \ref{eq:stab_synth}, we first propose an interval observer (cf. Definition \ref{defn:framers}) for $\mathcal{G}$ similar to \cite{9790824}, through the following dynamical system:
\begin{align}\label{eq:observer}
%\hat{\mathcal{G}}:\begin{cases}
\begin{array}{rl}
\overline{x}_{k+1}\hspace{-.1cm}&=(A\hspace{-.1cm}-\hspace{-.1cm}LC)^\oplus \overline{x}_k-(A\hspace{-.1cm}-\hspace{-.1cm}LC)^\ominus \underline{x}_k\hspace{-.1cm}+(LV)^\ominus \overline{v}\hspace{-.1cm}-\hspace{-.1cm}(LV)^\oplus \underline{v} \\
&+\phi_d(\overline{x}_k,\underline{x}_k)+\hspace{-.1cm}Ly_k\hspace{-.1cm}+\hspace{-.1cm}(B\hspace{-.1cm}-\hspace{-.1cm}LD)u_k+W^{\oplus}\overline{w}-W^{\ominus}\underline{w}\\
&+L^{\ominus}\psi_d(\overline{x}_k,\underline{x}_k)-L^{\oplus}\psi_d(\underline{x}_k,\overline{x}_k)\\
\underline{x}_{k+1}\hspace{-.1cm}&=(A\hspace{-.1cm}-\hspace{-.1cm}LC)^\oplus \underline{x}_k-(A\hspace{-.1cm}-\hspace{-.1cm}LC)^\ominus \overline{x}_k\hspace{-.1cm}+(LV)^\ominus \underline{v}\hspace{-.1cm}-\hspace{-.1cm}(LV)^\oplus \overline{v} \\
&+L^{\ominus}\psi_d(\underline{x}_k,\overline{x}_k)-L^{\oplus}\psi_d(\overline{x}_k,\underline{x}_k)\\
&+\phi_d(\underline{x}_k,\overline{x}_k)+\hspace{-.1cm}Ly_k\hspace{-.1cm}+\hspace{-.1cm}(B\hspace{-.1cm}-\hspace{-.1cm}LD)u_k+W^{\oplus}\underline{w}-W^{\ominus}\overline{w}
 \end{array}
\end{align}
%where if $\mathcal{G}$ is a CT system, then
%\begin{align}\label{eq:T_CT}
%\begin{array}{rl}
%\overline{x}_t^+\hspace{-.2cm}&\triangleq \dot{\overline{x}_t},(A-LC)^\uparrow \triangleq (A-LC)^\text{d}+(A-LC)^{\text{nd}+},\\
%\underline{x}_t^+\hspace{-.2cm}&\triangleq \dot{\underline{x}_t},(A-LC)^\downarrow \triangleq (A-LC)^{\text{nd}-},
%\end{array}
%\end{align}
%and if $\mathcal{G}$ is a DT system, then
%\begin{align}\label{eq:T_DT}
%\begin{array}{rl}
%\overline{x}_t^+&\triangleq \overline{x}_{t+1},(A-LC)^\uparrow \triangleq (A-LC)^+,\\
%\underline{x}_t^+&\triangleq \underline{x}_{t+1},(A-LC)^\downarrow \triangleq (A-LC)^{-}.
%\end{array}
%\end{align}
where $\phi_d:\mathbb{R}^n \times \mathbb{R}^n \to \mathbb{R}^n$ and $\psi_d:\mathbb{R}^n \times \mathbb{R}^n \to \mathbb{R}^l$ are tight mixed-monotone decomposition functions of the JSS mappings $\phi$ and $\psi$, computed via Proposition. Moreover, $L \in \mathbb{R}^{n \times l}$ is a to-be-designed observer gain matrix. 
%designed via Theorem \ref{thm:stability}, such that the proposed observer $\hat{\mathcal{G}}$ possesses the desired properties discussed in the following subsections. 
Defining $\varepsilon_k \triangleq \overline{x}_k-\underline{x}_k$, we obtain the following system from \eqref{eq:observer} that governs 
 the dynamics of the framer errors:
\begin{align}\label{eq:farmer_error}
 \varepsilon_{k+1}\hspace{-.1cm}=|A-LC|\varepsilon_k+\delta^{\phi}_k+\hspace{-.1cm}|L|\delta^{\psi}_k+|LV|\delta_v+|W|\delta_w,   
\end{align}
where 
\begin{align}\label{eq:deltas}
\begin{array}{rl}
&\overline{\delta}^{\rho}_k \triangleq \rho_d(\overline{x}_k,\underline{x}_k)-\rho(x_k),\underline{\delta}^{\rho}_k \triangleq \rho(x_k)\hspace{-.1cm}-\hspace{-.1cm}\rho_d(\underline{x}_k,\overline{x}_k),\\
& \delta_{\alpha} \triangleq \overline{\alpha}_k-\underline{\alpha}_k, \forall \rho \in \{\phi,\psi\}, \forall \alpha \in \{w,v\}.
\end{array}
\end{align}
The following proposition ensures that \eqref{eq:farmer_error} is an interval framer for \eqref{eq:system}--\eqref{eq:JSS_decom}.
\begin{prop}\label{prop:frame}%\cite[Theorems 1 \& 2]{9790824}
The system in \eqref{eq:observer} constructs a framer system for the plant in \eqref{eq:system}--\eqref{eq:JSS_decom} for all values of the control signal $u_k$, all realizations of the bounded noise and disturbance $v_k,w_k$ and all values of the gain $L$. Consequently, \eqref{eq:farmer_error} is a positive system. Furthermore, if $L$ is a solution to the SDP in \cite[(17) and (19)]{9790824}, then \eqref{eq:farmer_error} is an ISS system, i.e., \eqref{eq:observer} is an interval observer for \eqref{eq:system}--\eqref{eq:JSS_decom}.
\end{prop}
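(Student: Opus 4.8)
The plan is to prove the three assertions of Proposition~\ref{prop:frame} in order: (a) correctness of the framer for every $(u_k,v_k,w_k,L)$; (b) positivity of the error dynamics \eqref{eq:farmer_error}; and (c) ISS under the stated choice of $L$. For (a) I would argue by induction on $k$. The base case $\underline{x}_0 \le x_0 \le \overline{x}_0$ is Assumption~\ref{ass:initial_interval}. For the inductive step, assume $\underline{x}_k \le x_k \le \overline{x}_k$; using \eqref{eq:JSS_decom} to write $f(x_k)=Ax_k+\phi(x_k)$ and the measurement equation $y_k=Cx_k+\psi(x_k)+Du_k+Vv_k$ to eliminate the exact state via $Cx_k=y_k-\psi(x_k)-Du_k-Vv_k$, the plant dynamics rewrite as
\[
x_{k+1}=(A-LC)x_k+Ly_k+(B-LD)u_k+\phi(x_k)-L\psi(x_k)+Ww_k-LVv_k .
\]
I would then bound each term over the enclosures available at time $k$: Proposition~\ref{prop:bounding} applied to $(A-LC)x_k$ with $\underline{x}_k\le x_k\le\overline{x}_k$, to $Ww_k$ with $\underline{w}\le w_k\le\overline{w}$, and to $-LVv_k$ with $\underline{v}\le v_k\le\overline{v}$ (noting $(-LV)^\oplus=(LV)^\ominus$ and $(-LV)^\ominus=(LV)^\oplus$); the tight-decomposition property of Proposition~\ref{prop:tight_decomp}, i.e.\ $\phi_d(\underline{x}_k,\overline{x}_k)\le\phi(x_k)\le\phi_d(\overline{x}_k,\underline{x}_k)$ and the analogue for $\psi$, applied to $\phi(x_k)$; and the splitting $-L\psi(x_k)=L^\ominus\psi(x_k)-L^\oplus\psi(x_k)$ with $L^\oplus,L^\ominus\ge 0$, so that the $\psi$-bounds propagate through with the correct signs. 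Summing the resulting componentwise inequalities reproduces verbatim the right-hand sides of the two equations in \eqref{eq:observer}, hence $\underline{x}_{k+1}\le x_{k+1}\le\overline{x}_{k+1}$; since every step is valid for arbitrary $u_k$, arbitrary admissible $(v_k,w_k)$ and arbitrary $L$, claim (a) follows.

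For (b), subtracting the $\underline{x}$-recursion from the $\overline{x}$-recursion in \eqref{eq:observer} and using $M^\oplus+M^\ominus=|M|$ cancels the $Ly_k$- and $(B-LD)u_k$-terms and yields \eqref{eq:farmer_error} with $\delta^{\phi}_k=\phi_d(\overline{x}_k,\underline{x}_k)-\phi_d(\underline{x}_k,\overline{x}_k)$ and $\delta^{\psi}_k=\psi_d(\overline{x}_k,\underline{x}_k)-\psi_d(\underline{x}_k,\overline{x}_k)$. All of $|A-LC|,|L|,|LV|,|W|$ are nonnegative, $\delta_v,\delta_w\ge 0$ by Definition~\ref{defn:interval}, and $\delta^{\phi}_k,\delta^{\psi}_k\ge 0$ whenever $\underline{x}_k\le\overline{x}_k$ by the ordering property of tight decomposition functions. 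Since part (a) already gives $\varepsilon_k=\overline{x}_k-\underline{x}_k\ge 0$ for all $k$, the error system maps the nonnegative orthant, together with the nonnegative exogenous terms $\delta^{\phi}_k,\delta^{\psi}_k,\delta_v,\delta_w$, into itself, i.e.\ it is a positive system.

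For (c), \eqref{eq:mm_bounding} gives $\delta^{\phi}_k\le F_{\phi}\varepsilon_k$ and $\delta^{\psi}_k\le F_{\psi}\varepsilon_k$, so on the nonnegative orthant \eqref{eq:farmer_error} is dominated by the linear comparison system $\varepsilon_{k+1}\le(|A-LC|+F_{\phi}+|L|F_{\psi})\varepsilon_k+|LV|\delta_v+|W|\delta_w$. If $L$ solves the SDP \cite[(17) and (19)]{9790824}, that feasibility certificate is exactly a (linear co-positive Lyapunov) condition rendering $|A-LC|+F_{\phi}+|L|F_{\psi}$ Schur; the standard comparison/variation-of-constants argument for positive systems then yields $\varepsilon_k\le\mathcal{A}^k\varepsilon_0+\sum_{j=0}^{k-1}\mathcal{A}^{k-1-j}(|LV|\delta_v+|W|\delta_w)$ with $\mathcal{A}$ Schur and nonnegative, giving a $\mathcal{KL}$ term in $\|\varepsilon_0\|$ and a bounded $\mathcal{K}_\infty$ term in $\|\delta\|_{\ell_\infty}$, i.e.\ the bound of Definition~\ref{defn:stability}; hence \eqref{eq:observer} is an interval observer.

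I expect the main obstacle to be the term-by-term verification in part (a): keeping the $\oplus/\ominus$ bookkeeping straight, especially handling $-L\psi(x_k)$ through the $L^\oplus/L^\ominus$ split and checking that the noise contributions assemble into precisely the constant terms in \eqref{eq:observer}. Once that is done, parts (b) and (c) are essentially mechanical consequences of Propositions~\ref{prop:bounding} and~\ref{prop:tight_decomp} and the cited SDP design.
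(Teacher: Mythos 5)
Your proposal is correct and follows essentially the same route as the paper, which simply defers to \cite[Theorems 1 \& 2]{9790824} with the observation that the known terms $Bu_k$ and $Du_k$ carry over unchanged into both framers; your induction with Proposition~\ref{prop:bounding}, the decomposition-function enclosure $\rho_d(\underline{x}_k,\overline{x}_k)\le\rho(x_k)\le\rho_d(\overline{x}_k,\underline{x}_k)$, the $L^\oplus/L^\ominus$ split, and the subsequent positivity and comparison-system ISS argument is exactly the spelled-out version of that cited proof. The only nit is that your parenthetical guess about the cited SDP being a linear co-positive Lyapunov certificate need not match its actual (quadratic) form, but since stability of the error dynamics is precisely what that cited result guarantees, this does not affect correctness.
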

\begin{proof}  
The proof follows the lines of~\cite[Theorems 1 \& 2]{9790824} with the slight modification that the terms $Bu_k$ and $Du_k$ are added to the state and output equations, respectively, as well as to the corresponding framers, since they are treated as known variables when upper and lower framers are computed.
\end{proof}
\subsection{Closed-Loop System and Separation Principle}\vspace{-0.05cm}
In this subsection we propose the following nonlinear state feedback control to stabilize the system in \eqref{eq:observer}: 
\begin{align}\label{eq:control}
  u_k=\overline{K}\overline{x}_k-\underline{K}\underline{x}_k+K_{\nu}(\phi_d(\overline{x}_k,\underline{x}_k)-\phi_d(\overline{x}_k,\underline{x}_k)), 
\end{align}
%for the known framer system in \eqref{eq:observer}, 
where $\overline{K},\underline{K}$ and $K_{\nu}$ are to-be-designed control gain matrices in $\mathbb{R}^{m \times n}$. The goal is to stabilize the framer system \eqref{eq:observer} which consequently stabilizes the trajectory of the actual closed-loop system~\cite{smith2008global,chu1998mixed}.%, given the fact the states of the former bound/frame the states of the latter from below and above by Lemma \ref{lem:frame}. 

It is worth emphasizing the additional degree of freedom of our nonlinear control approach, i.e, $K_{\nu}$ compared to the linear control approach in \cite{efimov2013control}. This extra to-be-designed gain, compensates for the potential dynamic instabilities that are due to the system nonlinearities and extends the feasible set of admissible control signals, hence potentially resulting in tighter closed-loop intervals.

Next, to obtain the dynamics of the closed-loop system, we plug $u_k$ from \eqref{eq:control} into \eqref{eq:observer} and use the fact that 
\begin{align}
\begin{array}{c}
Ly_k+(B-LD)u_k=Bu_k+L(y_k-Du_k)\\
=Bu_k+Cx_k+\psi(x_k)+Vv_k,
\end{array}
\end{align}
which results in the following closed-loop system: %for %\eqref{eq:observer}:  
\begin{align} \label{eq:eqiv_sys}
%\begin{align}\label{eq:observer}
%\hat{\mathcal{G}}:\begin{cases}
\begin{array}{rl}
\overline{x}_{k+1}\hspace{-.1cm}&=((A\hspace{-.1cm}-\hspace{-.1cm}LC)^\oplus+B\overline{K}) \overline{x}_k-((A\hspace{-.1cm}-\hspace{-.1cm}LC)^\ominus+B\underline{K}) \underline{x}_k\\
&+(LV)^\ominus \overline{v}\hspace{-.1cm}-\hspace{-.1cm}(LV)^\oplus \underline{v}+LVv_k+W^{\oplus}\overline{w}-W^{\ominus}\underline{w}  \\
&+\phi_d(\overline{x}_k,\underline{x}_k)+L^{\oplus}\overline{\delta}^{\psi}_k+L^{\ominus}\underline{\delta}^{\psi}_k+BK_{\nu}\delta^{\phi}_k+LCx_k,\\
\underline{x}_{k+1}\hspace{-.1cm}&=((A\hspace{-.1cm}-\hspace{-.1cm}LC)^\oplus-B\underline{K}) \underline{x}_k-((A\hspace{-.1cm}-\hspace{-.1cm}LC)^\ominus-B\overline{K}) \overline{x}_k\\
&+(LV)^\ominus \underline{v}\hspace{-.1cm}-\hspace{-.1cm}(LV)^\oplus \overline{v}+LVv_k+W^{\oplus}\underline{w}-W^{\ominus}\overline{w}  \\
&+\phi_d(\underline{x}_k,\overline{x}_k)-L^{\oplus}\overline{\delta}^{\psi}_k-L^{\ominus}\underline{\delta}^{\psi}_k+BK_{\nu}\delta^{\phi}_k+LCx_k,
 \end{array}
\end{align}
where $\overline{\delta}^{\phi}_k$ and $\overline{\delta}^{\psi}_k$ are defined in \eqref{eq:deltas}.

The system in \eqref{eq:eqiv_sys} still contains the actual state $x_k$ which is not measured and hence is not accessible. To resolve this issue, we apply a change if variabes and define the following sequences of \emph{upper and lower closed-loop errors} for $k \geq 0$: %via the following change of variables:
\begin{align}\label{eq:frame_error}
\begin{array}{rl}
\overline{e}_k &\triangleq \overline{x}_k-x_k \Rightarrow x_k=\overline{x}_k-\overline{e}_k, \\
\underline{e}_k &\triangleq x_k-\underline{x}_k \Rightarrow x_k=\underline{x}_k+\underline{e}_k.
\end{array}
\end{align}
 %First, we define the upper and lower framer errors:
Next, from \eqref{eq:system} and the fact that $$L(y_k-Cx_k-\psi(x_k)-Du_k-Vv_k)=0, \ \forall k,$$ we obtain:
\begin{align*}
x_{k+1}&=(A-LC)x_k-L\psi(x_k)+\phi(x_k)\\
&+Ww_k+Ly_k-LVv_k+(B-LD)u_k.
\end{align*}
%The following lemma shows that the a
Finally, combining this, \eqref{eq:eqiv_sys}, and the fact that $M=M^{\oplus}-M^{\ominus}$ for any matrix $M$, we obtain the dynamics of the closed-loop errors: %which if augmented, results in the following comparison system: %after increasing with \eqref{eq:eqiv_sys}, and given \eqref{eq:frame_error}:
\begin{align}\label{eq:err_sys}
% \begin{align} \label{eq:eqiv_sys}
%\begin{align}\label{eq:observer}
%\hat{\mathcal{G}}:\begin{cases}
\begin{array}{rl}
\hspace{-.2cm}\overline{e}_{k+1}\hspace{-.1cm}&=(A\hspace{-.1cm}-\hspace{-.1cm}LC)^\oplus \overline{e}_k-(A\hspace{-.1cm}-\hspace{-.1cm}LC)^\ominus \underline{e}_k+\overline{\delta}^{\phi}_k-Ww_k\\
&+(LV)^\ominus \overline{v}\hspace{-.1cm}-\hspace{-.1cm}(LV)^\oplus \underline{v}+LVv_k+W^{\oplus}\overline{w}-\hspace{-.1cm}W^{\ominus}\underline{w},\\
%&+\phi_d(\overline{x}_k,\underline{x}_k)+L^{\oplus}\overline{\delta}^{\psi}_k+L^{\ominus}\underline{\delta}^{\psi}_k\\
\hspace{-.2cm}\underline{e}_{k+1}\hspace{-.1cm}&=(A\hspace{-.1cm}-\hspace{-.1cm}LC)^\oplus \underline{e}_k-(A\hspace{-.1cm}-\hspace{-.1cm}LC)^\ominus \overline{e}_k+\underline{\delta}^{\phi}_k+Ww_k\\
&+(LV)^\ominus \underline{v}\hspace{-.1cm}-\hspace{-.1cm}(LV)^\oplus \overline{v}+LVv_k+W^{\oplus}\underline{w}-\hspace{-.1cm}W^{\ominus}\overline{w},
 \end{array}
\end{align}   
%\end{align}
We are ready to sate our fist result, which shows that the augmented system of closed-loop framers and errors has a comparison system that is independent of $x_k$, and can be stabilized by separately designing the observer gain $L$, and the control gain matrices $\overline{K}, \underline{K}, K_{\nu}$. 
\begin{lem}[Separation Principle]\label{lem:separation}
The augmentation of \eqref{eq:eqiv_sys} and \eqref{eq:err_sys} can be stabilized by first designing a stabilizing observer gain $L$ for the framer error system \eqref{eq:farmer_error}, and then synthesizing the state feedback control gains $\overline{K},\underline{K}, K_{\nu}$, given $L$ to stabilize the augmented system.
\end{lem}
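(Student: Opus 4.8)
The plan is to exhibit a linear comparison system for the augmented state $\xi_k \triangleq [\overline{x}_k^\top\ \underline{x}_k^\top\ \overline{e}_k^\top\ \underline{e}_k^\top]^\top$ whose dynamics matrix is block upper-triangular, so that its spectral/ISS properties decouple into an ``error block'' and a ``framer block.'' First I would stack \eqref{eq:err_sys} and observe that the pair $(\overline{e}_k,\underline{e}_k)$ evolves autonomously: the right-hand sides of \eqref{eq:err_sys} depend only on $\overline{e}_k,\underline{e}_k$, on the exogenous bounded signals $w_k,v_k,\overline{w},\underline{w},\overline{v},\underline{v}$, and on the nonlinear mismatch terms $\overline{\delta}^\phi_k,\underline{\delta}^\phi_k$. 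Crucially, by Proposition~\ref{prop:tight_decomp} (the bound \eqref{eq:mm_bounding}) together with the definitions \eqref{eq:deltas} and the sign of the closed-loop errors \eqref{eq:frame_error}, one has $0\le \overline{\delta}^\phi_k+\underline{\delta}^\phi_k \le F_\phi(\overline{e}_k+\underline{e}_k)$, and likewise $\overline{\delta}^\phi_k,\underline{\delta}^\phi_k$ are each bounded componentwise by affine functions of $\overline{e}_k,\underline{e}_k$; hence the error subsystem admits the comparison bound
\begin{align}\label{eq:err_comp}
\begin{bmatrix}\overline{e}_{k+1}\\ \underline{e}_{k+1}\end{bmatrix}
\preceq
\underbrace{\begin{bmatrix}(A-LC)^\oplus+F_\phi & (A-LC)^\ominus \\ (A-LC)^\ominus & (A-LC)^\oplus+F_\phi\end{bmatrix}}_{\triangleq \mathcal{A}_e(L)}
\begin{bmatrix}\overline{e}_k\\ \underline{e}_k\end{bmatrix}+d_k,
\end{align}
with $d_k$ a bounded nonnegative disturbance built from $|W|\delta_w,|LV|\delta_v$ and the constant noise-bound terms. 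The matrix $\mathcal{A}_e(L)$ is exactly (a symmetric rearrangement of) the framer-error map in \eqref{eq:farmer_error}; so if $L$ is chosen per Proposition~\ref{prop:frame} (i.e., the SDP of \cite{9790824}), $\mathcal{A}_e(L)$ is Schur, and $(\overline{e}_k,\underline{e}_k)$ is ISS with respect to $d_k$ — this is step one and it uses \emph{only} $L$.

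Next I would turn to the framer subsystem \eqref{eq:eqiv_sys}. The obstacle there is the term $LCx_k$, which re-introduces the true state. I would eliminate it using the change of variables \eqref{eq:frame_error}: substitute $x_k=\overline{x}_k-\overline{e}_k$ (or the symmetric $\underline{x}_k+\underline{e}_k$, or a convex mix) into \eqref{eq:eqiv_sys}. The $LC\overline{x}_k$ that appears then combines with $(A-LC)^\oplus\overline{x}_k$ and with the gain term $B\overline{K}\overline{x}_k$; using $M=M^\oplus-M^\ominus$ the $\overline{x}_k$-coefficient collapses and the residual $-LC\overline{e}_k$ is a forcing term driven by the (already-ISS) error block. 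After this substitution the framer dynamics take the form
\begin{align}\label{eq:framer_closed}
\begin{bmatrix}\overline{x}_{k+1}\\ \underline{x}_{k+1}\end{bmatrix}
= \mathcal{A}_x(L,\overline{K},\underline{K},K_\nu)\begin{bmatrix}\overline{x}_k\\ \underline{x}_k\end{bmatrix}
+ \mathcal{B}_e\begin{bmatrix}\overline{e}_k\\ \underline{e}_k\end{bmatrix}
+ (\text{bounded exogenous terms}),
\end{align}
where — and this is the point of introducing $\overline{K},\underline{K},K_\nu$ — the nonlinear mismatch $\phi_d(\overline{x}_k,\underline{x}_k)+BK_\nu\delta^\phi_k+L^\oplus\overline{\delta}^\psi_k+L^\ominus\underline{\delta}^\psi_k$ is again bounded, via \eqref{eq:mm_bounding}, by affine terms in $\overline{x}_k,\underline{x}_k$ (through $F_\phi,F_\psi$ and the gains), so that a linear comparison system with matrix $\mathcal{A}_x$ dominates \eqref{eq:framer_closed}. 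The key structural observation is that $\mathcal{A}_x$ depends on $L$ only through the \emph{fixed} quantities $(A-LC)^\oplus,(A-LC)^\ominus,L^\oplus,L^\ominus$ and on the free gains $\overline{K},\underline{K},K_\nu$ linearly through $B\overline{K},B\underline{K},BK_\nu$; hence, with $L$ already fixed from step one, rendering $\mathcal{A}_x$ Schur is a design problem in $\overline{K},\underline{K},K_\nu$ alone.

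Finally I would assemble the cascade: the augmented comparison system has block-triangular dynamics
\begin{align}\label{eq:cascade}
\begin{bmatrix}\xi^x_{k+1}\\ \xi^e_{k+1}\end{bmatrix}
\preceq
\begin{bmatrix}\mathcal{A}_x & \mathcal{B}_e\\ \mathbf{0} & \mathcal{A}_e\end{bmatrix}
\begin{bmatrix}\xi^x_k\\ \xi^e_k\end{bmatrix}+ \tilde d_k,
\end{align}
and I would invoke the standard fact that a cascade of two ISS positive (Metzler/Schur) subsystems is ISS: since $\mathcal{A}_e$ is Schur (step one) the $\xi^e$-channel is ISS in $\tilde d_k$, its bounded output feeds $\xi^x$ through $\mathcal{B}_e$, and since $\mathcal{A}_x$ is Schur (step two) the $\xi^x$-channel is ISS in that feed plus $\tilde d_k$; composing the $\mathcal{KL}$/$\mathcal{K}_\infty$ gains yields ISS of the whole augmented state, hence (by correctness of the framers, Definition~\ref{defn:framers}) ISS of $x_k$ — this establishes the claimed stabilization, and the block-triangular structure is precisely the separation principle, since $L$ was chosen with no reference to $\overline{K},\underline{K},K_\nu$ and then the control gains were chosen with $L$ held fixed. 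I expect the main obstacle to be step two: showing that after substituting $x_k=\overline{x}_k-\overline{e}_k$ the $LCx_k$ terms genuinely cancel out of the $\mathcal{A}_x$ block (leaving only an $\mathcal{B}_e$-type coupling to the error state) and that the leftover nonlinear terms are uniformly dominated by an affine bound of the right sign so that a \emph{linear positive} comparison system exists; the bookkeeping with $(\cdot)^\oplus/(\cdot)^\ominus$ and the tight-decomposition inequality \eqref{eq:mm_bounding} has to be done carefully to keep every comparison inequality pointing the same way.
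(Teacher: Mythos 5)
Your proposal is essentially the paper's own argument: the paper likewise augments the closed-loop errors and framers into one comparison system, uses \eqref{eq:mm_bounding}/\eqref{eq:F} to bound the mismatch terms by $F_{\rho}(\overline{e}_k+\underline{e}_k)$, splits $LCx_k$ as $LC\overline{x}_k-LC\overline{e}_k$ via \eqref{eq:frame_error} exactly as you propose, obtains a block-triangular comparison matrix $\tilde{A}$ whose error block depends only on $L$ while the framer block carries $B\overline{K},B\underline{K},BK_{\nu}$, and concludes separation from the triangular structure together with $\mathbf{0}_n\le\overline{e}_k,\underline{e}_k\le\varepsilon_k$ so that the $L$ from Proposition \ref{prop:frame} stabilizes the error channel. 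The only minor slip is that your error comparison matrix $\mathcal{A}_e(L)$ should carry $F_{\phi}$ in its off-diagonal blocks as well, since $\overline{\delta}^{\phi}_k\le F_{\phi}(\overline{e}_k+\underline{e}_k)$ couples both errors; this is bookkeeping only and does not affect the separation argument, as the block still depends solely on $L$ and fixed system data.
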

\begin{proof}
Augmenting \eqref{eq:eqiv_sys} and \eqref{eq:err_sys}, and using the following inequalities for $\rho \in \{\phi,\psi\}$ according to Proposition \ref{prop:tight_decomp} and the fact that $\rho_d(\underline{x}_k,\overline{x}_k)\leq \rho(x_k)\leq \rho_d(\overline{x}_k,\underline{x}_k)$, and hence:
\begin{align}\label{eq:F}
\begin{array}{rl}
\delta^{\rho}_k&=\hspace{-.1cm}\rho_d(\overline{x}_k,\underline{x}_k)-\rho(x_k)\leq \rho_d(\overline{x}_k,\underline{x}_k)-\hspace{-.1cm}\rho_d(\underline{x}_k,\overline{x}_k)\\
&\leq F_{\rho}(\overline{x}_k-\underline{x}_k)=F_{\rho}(\overline{x}_k-x_k+x_k-\underline{x}_k)\\
&=F_{\rho}(\overline{e}_k+\underline{e}_k),
\end{array}
\end{align}
returns a comparison augmented system:
\begin{align}\label{eq:compare_sys}
z_{k+1}\leq \tilde{A}z_k+\lambda(z_k)+\Lambda \eta_k, 
\end{align}
where $\eta_k \triangleq [\overline{w}^\top \underline{w}^\top w^\top_k \overline{v}^\top \underline{v}^\top v_k^\top]^\top$,
\begin{align*}
z&\triangleq [\overline{e}^\top \underline{e}^\top \overline{x}^\top \underline{x}^\top]^\top, \lambda(z) \triangleq [\mathbf{0}^\top_n \hspace{.1cm} \mathbf{0}^\top_n \hspace{.1cm} \phi^\top_d(\overline{x},\underline{x}) \hspace{.1cm} \phi^\top_d(\underline{x},\overline{x})]^\top,\\
 \Lambda &\triangleq \begin{bmatrix} W^\oplus & -W^\ominus &-W & (LV)^\ominus & -(LV)^\oplus & LV \\  -W^\ominus & W^\oplus & W & -(LV)^\oplus & (LV)^\ominus & LV \\
 W^\oplus & -W^\ominus &\mathbf{0}_{n,n_w} & (LV)^\ominus & -(LV)^\oplus & LV \\  -W^\ominus & W^\oplus & \mathbf{0}_{n,n_w} & -(LV)^\oplus & (LV)^\ominus & LV
 \end{bmatrix},\\
\tilde{A}&=\begin{bmatrix} [\tilde{A}_{11}] & [\tilde{A}_{12}] & \mathbf{0}_{n\times n} & \mathbf{0}_{n\times n} \\  [\tilde{A}_{21}] & [\tilde{A}_{22}] &\mathbf{0}_{n\times n} & \mathbf{0}_{n\times n} \\ [\tilde{A}_{31}] & [\tilde{A}_{32}] & [\tilde{A}_{33}] &[\tilde{A}_{34}] \\ [\tilde{A}_{41}] & [\tilde{A}_{42}] & [\tilde{A}_{43}] &[\tilde{A}_{44} ]
\end{bmatrix}.
\end{align*}
Moreover, the block matrices inside $\tilde{A}$ are as follows:
\begin{align*}
[\tilde{A}_{11}] &\triangleq (A-LC)^\oplus+F_{\phi}+|L|F_{\psi},\\
[\tilde{A}_{12}] &\triangleq (A-LC)^\ominus+F_{\phi}+|L|F_{\psi},\\
[\tilde{A}_{21}] &\triangleq (A-LC)^\ominus+F_{\phi}-|L|F_{\psi},\\
[\tilde{A}_{22}] &\triangleq (A-LC)^\oplus+F_{\phi}-|L|F_{\psi},\\
[\tilde{A}_{31}] &\triangleq -LC+|L|F_{\psi}+BK_{\nu}F_{\phi},\\ [\tilde{A}_{32}] &\triangleq |L|F_{\psi}+BK_{\nu}F_{\phi},\\
[\tilde{A}_{33}] &\triangleq (A-LC)^\oplus+\hspace{-.1cm}B\overline{K}+LC,\\
[\tilde{A}_{34}] &\triangleq -(A-LC)^\ominus\hspace{-.1cm}-\hspace{-.1cm}B\underline{K}, \\ 
[\tilde{A}_{41}] &\triangleq -|L|F_{\psi}+BK_{\nu}F_{\phi},\\ [\tilde{A}_{42}] &\triangleq LC-|L|F_{\psi}+BK_{\nu}F_{\phi},\\
[\tilde{A}_{43]} &\triangleq -(A-LC)^\ominus+B\overline{K},\\ [\tilde{A}_{44}] &\triangleq (A-LC)^\oplus-\hspace{-.1cm}B\underline{K}\hspace{-.1cm}+\hspace{-.1cm}LC. 
\end{align*}
The comparison system \eqref{eq:compare_sys} has a linear component with state matrix $\tilde{A}$, as well as the Lipschitz nonlinear component $\lambda(z)$. Note that $\lambda$ is a locally Lipschitz mapping since $\phi$ has bounded Jacobin by Assumption \ref{ass:mixed_monotonicity} and Proposition \ref{prop:tight_decomp}, and hence is locally Lipschitz. Consequently, $\phi_d$ is Lipschitz by construction (cf. \eqref{eq:JJ_decomp} in Proposition \ref{prop:tight_decomp}). Since, $\tilde{A}$ is a block lower trinagular matrix, its set of eigenvalues is a superset of the set of eigenvalues of the matrix $\tilde{A}_u \triangleq \begin{bmatrix} \tilde{A}_{11} & \tilde{A}_{12} \\ \tilde{A}_{21} & \tilde{A}_{22} \end{bmatrix}$ which only depend on $L$. Moreover, the nonlinear vector function $\lambda(z)$ does not affect the stability of $\tilde{A}_u$ due to its zero upper elements. Hence, the observer and controller gains can be designed separately to stabilize system \eqref{eq:compare_sys}. Moreover, by applying the last inequality in \eqref{eq:F}, it is straightforward to see that \eqref{eq:err_sys} admits a linear comparison system with the sate matrix $\tilde{A}_u$. Hence, any observer gain that stabilizes \eqref{eq:compare_sys} should stabilize \eqref{eq:err_sys} and vice versa. Finally, from the definitions of $\varepsilon_k, \overline{e}_k, \underline{e}_k$, we have $\mathbf{0}_n \leq \overline{e}_k \leq \varepsilon_k$ and $\mathbf{0}_n \leq \underline{e}_k \leq \varepsilon_k$. Therefore, any observer gain $L$ that stabilizes the open loop framer error system \eqref{eq:farmer_error}, should stabilize \eqref{eq:err_sys}, and consequently \eqref{eq:compare_sys}.      
\end{proof}
\subsection{Control Design}
Based on Lemma \eqref{lem:separation}, the design of the observer and control gains can be done separately, where $L$ can be designed first to stabilize \eqref{eq:farmer_error}. In our previous work~\cite{9790824} (also summarized in Proposition \ref{prop:frame}) we provided linear matrix inequalities, through which such an $L$ can be computed. 
%multiple design approaches to obtain an observer gain that stabilizes \eqref{} by solvig semidefinite, mixed-integer linear, and mixed-integer semidefinte programs, respectively. 
Given this observer gain, in order to obtain tractable LMIs to synthesize the stabilizing feedback control gains $\overline{K},\underline{K},K_{\nu}$, the main challenge to be addressed is to resolve the bilinearities between decision variables when applying the existing results on the stability of Lipschitz nonlinear DT systems to \eqref{eq:compare_sys}. The following theorem and its proof tackles this challenge by applying specific similarity transformations and change of variables.    
\begin{thm}\label{thm:control_design}
Suppose $L \in \mathbb{R}^{n \times l}$ is computed through Proposition \ref{prop:frame}, $\alpha > 0$ is a chosen real constant (e.g., is a desired decay rate for the error system \eqref{eq:err_sys}) picked by the designer, and $\epsilon=\frac{1}{\alpha \gamma^2}-1$, where $\gamma=\|F_{\phi}\|_{\infty}$ is the Lipshitz constant of the decomposition function $\phi_d$. Let
$(\mu_*,\Gamma_*,Q_*,\Theta_*)$ be a solution to the following SDP:
\begin{align}\label{eq:stabilizing_K}
&\min\limits_{\{\mu > 0, \Gamma \succ \mathbf{0}_{4n \times 4n}, Q \succ \mathbf{0}_{4n \times 4n}, \Theta\}}{\mu}\\
\nonumber &\text{s.t.} \ \begin{bmatrix} \Gamma - Q & Q & Q\hat{A}^\top +\Theta \hat{B}^\top & I_{4n} \\
* & -\alpha I_{4n} & \mathbf{0}_{4n \times 4n} & \mathbf{0}_{4n \times 4n} \\
* & * & -\frac{1}{2}Q & Q \\
* & * & * & Q-2\epsilon \Gamma \end{bmatrix} \prec \mathbf{0}_{16n \times 16n},\\
\nonumber & \quad \ \ \begin{bmatrix}
-\mu I_{\tilde{n}} & \Lambda^\top  & \Lambda^\top  \\
* & -\frac{1}{2}Q & \mathbf{0}_{4n \times 4n} \\
* & * & -\Gamma
\end{bmatrix} \prec \mathbf{0}_{\hat{n} \times \hat{n}}, \ \begin{bmatrix} I_{4n} & Q \\
Q & \Gamma \end{bmatrix} \succeq \mathbf{0}_{8n \times 8n}, 
\end{align}
where $\tilde{n}\triangleq 3(n_w+n_v)$ is the dimension of the augmented noise vector $\eta_k \in \mathbb{R}^{\tilde{n}}$ and $\hat{n} \triangleq \tilde{n}+8n$. Moreover, $\hat{A} \in \mathbb{R}^{4n \times 4n}$ and $\hat{B} \in \mathbb{R}^{4n \times 12n}$ are defined as follows:
\begin{align*}
%\tilde{A}_0 \triangleq \begin{bmatrix} 
[\hat{A}_{ij}]&=[\tilde{A}_{ij}],  i \in \{1,2\},  j \in \{1,2,3,4\},\\
[\hat{A}_{31}]&= LC+|L|F_{\psi}, [\hat{A}_{32}]= |L|F_{\psi},\\
[\hat{A}_{33}]&=(A-Lc)^\oplus+LC, [\hat{A}_{34}] = -(A-LC)^\ominus, \\
[\hat{A}_{41}]&=-|L|F_{\psi}, [\hat{A}_{42}]= LC-|L|F_{\psi},\\
[\hat{A}_{43}]&=-(A-LC)^\ominus, [\hat{A}_{44}]=(A-LC)^\oplus +LC,\\
[\hat{B}_{ij}]&=
\begin{cases} \mathbf{0}_{n \times 3m}, \quad \quad \quad \quad \quad \quad i \in \{1,2\},  j \in \{1,2,3,4\},\\
 \begin{bmatrix}\mathbf{0}_{n \times m} & \mathbf{0}_{n \times  m} & B\end{bmatrix}, \quad  i \in \{3,4\},  j \in \{1,2\},\\
 \begin{bmatrix}B & \mathbf{0}_{n \times m} & \mathbf{0}_{n \times m}\end{bmatrix},  \quad i \in \{3,4\}, j=3,\\
 \begin{bmatrix}\mathbf{0}_{n \times m} & -B & \mathbf{0}_{n \times m}\end{bmatrix},  i \in \{3,4\}, j=4
\end{cases}\hspace{-.5cm}.
\end{align*}
%Moreover, . Finally, $\epsilon=\frac{1}{\alpha \gamma^2}-1$, where $\gamma=\|F_{\phi}\|_{\infty}$. 
Then, the augmented closed-loop system \eqref{eq:compare_sys} is ISS and satisfies 
\begin{align}\label{eq:noise-attenuation}
\|z_k\|^2_2 \leq \mu_*\|\tilde{w}_k\|^2_2, \ \forall k,
\end{align}
with the control gains $\overline{K}_*,\underline{K}_*,K_{\nu*}$ that are obtained as follows: %block diagonal ``augmented control matrix":
\begin{align}\label{eq:control_design}
\tilde{K}_*=(Q_*^{-1}\Theta_*)^\top=\tdiag_4(\tdiag(\overline{K}_*, \underline{K}_*,{K}_{\nu*}F_{\phi})).
\end{align}
%$\overline{K}_* =$, $\underline{K}_* =$ and ${K}_{n*} =$. 
%Furthermore,
\end{thm}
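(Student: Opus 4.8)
## Proof Proposal

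The plan is to recognize that the comparison system \eqref{eq:compare_sys} is a Lipschitz nonlinear discrete-time system of the form $z_{k+1} \le \tilde A z_k + \lambda(z_k) + \Lambda \eta_k$, and to apply a Lyapunov-based $\ell_2$-gain (ISS) analysis with a quadratic storage function $V(z) = z^\top P z$. The closed-loop state matrix $\tilde A$ depends bilinearly on the controller gains $\overline K, \underline K, K_\nu$ through the block structure exhibited after Lemma \ref{lem:separation}; writing $\tilde A = \hat A + \hat B \tilde K$ for the stacked gain $\tilde K = \tdiag_4(\tdiag(\overline K, \underline K, K_\nu F_\phi))$ isolates these bilinearities into the single product $\hat B \tilde K$. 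The first step is therefore to verify this algebraic decomposition: substitute the definitions of $\hat A$ and $\hat B$ and check that $[\hat A_{ij}] + [\hat B_{ij}]\,[\tilde K]$ reproduces each block $[\tilde A_{ij}]$ of $\tilde A$ — in particular that the controller terms $B\overline K$, $-B\underline K$, and $BK_\nu F_\phi$ land in the correct blocks $[\tilde A_{33}],[\tilde A_{34}],[\tilde A_{43}],[\tilde A_{44}]$ and $[\tilde A_{31}],[\tilde A_{32}],[\tilde A_{41}],[\tilde A_{42}]$.

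Next I would set up the dissipation inequality $V(z_{k+1}) - V(z_k) \le -\|z_k\|_2^2 + \mu\,\|\eta_k\|_2^2$ (in the paper's notation $\tilde w_k = \eta_k$), which upon summation yields \eqref{eq:noise-attenuation} and, combined with positive-definiteness of $P$, ISS. Expanding $V(z_{k+1})$ along the dynamics produces quadratic terms in $z_k$, cross terms with $\lambda(z_k)$, and terms in $\eta_k$; the Lipschitz bound on $\phi_d$ with constant $\gamma = \|F_\phi\|_\infty$ is absorbed via an S-procedure term $\epsilon\big(\gamma^2 \|z_k\|^2 - \|\lambda(z_k)\|^2\big) \ge 0$, which is exactly why $\epsilon = \tfrac{1}{\alpha\gamma^2} - 1$ and the parameter $\alpha$ appear. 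Collecting everything into a quadratic form in $(z_k, \lambda(z_k), \eta_k)$ and demanding negativity gives a matrix inequality that is bilinear in $P$ and $\tilde K$. The standard remedy is the change of variables $Q = P^{-1}$ (congruence transformation by $\diag(Q,\ldots)$), $\Theta = Q\tilde K^\top$, together with an auxiliary variable $\Gamma$ to linearize the $\epsilon\Gamma$ / $2\epsilon\Gamma$ coupling and a Schur complement to pull out the $\hat A, \hat B$ product; the constraint $\begin{bmatrix} I_{4n} & Q \\ Q & \Gamma\end{bmatrix} \succeq \mathbf 0$ enforces $\Gamma \succeq Q^2$, i.e. $\Gamma \succeq Q P Q$, closing the relaxation. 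Applying the Schur complement to the resulting block matrix reproduces the first LMI in \eqref{eq:stabilizing_K}; a second Schur-complement manipulation on the $\eta_k$-dependent terms yields the $\mu$-LMI. Finally, recovering $\tilde K_* = (Q_*^{-1}\Theta_*)^\top$ and reading off $\overline K_*, \underline K_*, K_{\nu *}$ from the block-diagonal structure \eqref{eq:control_design} (using invertibility of $F_\phi$, assumed without loss of generality) completes the synthesis, and the separation principle (Lemma \ref{lem:separation}) guarantees that stabilizing \eqref{eq:compare_sys} stabilizes the original closed loop.

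The main obstacle I anticipate is the bookkeeping in the congruence transformation and Schur complements: one must apply the congruence by an appropriately block-structured matrix so that the bilinear term $P\hat B\tilde K$ becomes $\hat B \Theta^\top$ (or $\Theta \hat B^\top$) after transposition, while simultaneously ensuring that the block-diagonal constraint on $\tilde K$ — namely that $\tilde K$ is forced to have the form $\tdiag_4(\tdiag(\overline K, \underline K, K_\nu F_\phi))$ rather than a full matrix — is respected by the parametrization of $\Theta$. A secondary subtlety is justifying that the S-procedure relaxation is lossless enough (the constant $\alpha$ is a free design parameter precisely to recover feasibility), and verifying the dimension counts ($\tilde n = 3(n_w+n_v)$, $\hat n = \tilde n + 8n$, and the $16n \times 16n$ and $8n \times 8n$ sizes) so the LMIs are well-posed. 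Once the transformation is set up correctly, the remaining steps are routine algebra: expand, apply Young's inequality to the noise cross-terms, invoke the Lipschitz bound, and Schur-complement.
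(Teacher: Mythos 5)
Your proposal is correct and follows essentially the same route as the paper: the paper obtains the base ISS/attenuation matrix inequalities \eqref{eq:LMI_1}--\eqref{eq:LMI_2} by directly citing \cite[Lemma 3]{abbaszadeh2009lmi} (rather than re-deriving them via the Lyapunov dissipation plus S-procedure argument you sketch, which is what underlies that lemma anyway), and then performs exactly the steps you describe: congruence by $\mathrm{diag}(P^{-1},I_{4n},P^{-1},P^{-1})$ and $\mathrm{diag}(I_{\tilde n},P^{-1},P^{-1})$, the change of variables $Q=P^{-1}$, $\Theta=Q\tilde{K}^\top$ with $\tilde{A}=\hat{A}+\hat{B}\tilde{K}$, and the auxiliary variable $\Gamma\succeq Q^2$ enforced by the Schur-complement constraint, before recovering the gains via \eqref{eq:control_design}. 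The subtleties you flag (verifying the $\hat{A}+\hat{B}\tilde{K}$ decomposition blockwise and enforcing the block-diagonal structure of $\tilde{K}$ through the parametrization of $\Theta$) are genuine and are in fact passed over silently in the paper's own proof, so raising them is a point in your favor rather than a deviation.
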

%\vspace{-.2cm}
\begin{proof}
First, note that the augmented system in \eqref{eq:compare_sys} has a Lipschitz nonlinear component $\lambda$. This is because the vector field $\phi$ has bounded Jacobians by Assumption \ref{ass:mixed_monotonicity} and \eqref{eq:JSS_Jacobian}, and so is Lipschitz. Hence, $\phi_d$ is also Lipschitz by construction (cf. \eqref{eq:JJ_decomp} and \eqref{eq:Dj} in Proposition \ref{prop:tight_decomp}), with the Lipschitz constant $\gamma=\|F_{\phi}\|_{\infty}$. Then, by \cite[Lemma 3]{abbaszadeh2009lmi}, \eqref{eq:compare_sys} is input-to-state stable and satisfies\eqref{eq:noise-attenuation} if the following LMIs hold (which are the special cases of \cite[(15) and (16)]{abbaszadeh2009lmi} with $H$ being an identity matrix, while $N$ and $M$ are set to be zero matrices):
\begin{align}\label{eq:LMI_1}
&\begin{bmatrix}
I_{4n}-P & I_{4n} & \tilde{A}^\top P & \mathbf{0}_{4n \times 4n} \\
* & -\alpha I_{4n} & \mathbf{0}_{4n \times 4n} & \mathbf{0}_{4n \times 4n} \\
* & * & -\frac{1}{2}P & P \\
* & * & * & P-2\epsilon I
\end{bmatrix} \prec \mathbf{0}_{16n \times 16n}, \\
\label{eq:LMI_2}&\begin{bmatrix}
-\mu I_{\tilde{n}} & \Lambda^\top P & \Lambda^\top P \\
* & -\frac{1}{2}P & \mathbf{0}_{4n \times 4n} \\
* & * & -I_{4n}
\end{bmatrix} \prec \mathbf{0}_{\hat{n} \times \hat{n}}.
\end{align}
\eqref{eq:LMI_1} and \eqref{eq:LMI_2} cannot be directly used to design the control gains due to the existence of the terms $P\tilde{A}$ which contains bilinearities in the form of pre- and post- multiplication of matrix $B$ with decision variables, i.e., the control gains. To overcome this difficulty, we apply two similarity transformations by pre- and post-multiplying the LMIs in \eqref{eq:LMI_1} and \eqref{eq:LMI_2} by $\text{diag}(P^{-1},I_{4n},P^{-1},P^{-1})$ and $\text{diag}(I_{\tilde{n}},P^{-1},P^{-1})$, respectively. This, combined with applying the change of variables $Q=P^{-1}$, and defining the new matrix variable $\Gamma$ that  satisfies $\Gamma \succeq P^{-2}=Q^2 \Leftrightarrow \begin{bmatrix} I_{4n} & Q \\
Q & \Gamma \end{bmatrix} \succeq \mathbf{0}_{8n \times 8n}$ by Schur complements, returns the results.  
\end{proof}
\section{Illustrative Example}
The effectiveness of our control design is illustrated by the following example.
Consider a slightly modified version of the system in \cite[Example 2]{abbaszadeh2009lmi} in the form of \eqref{eq:system}--\eqref{eq:JSS_decom} with:
\begin{align*}
\label{eq:exampletwo}   
A&=\begin{bmatrix}
   0.5000 & -0.5975 & 0.3735 & 0.0457 & 0.3575\\
   0.2500 & 0.3000 & 0.4017  & 0.1114 & 0.0227\\
   0.4880 & 01384  & 0.2500  & 0.7500 & 0.7500\\
   0.3838 & 0.0974 & 0.5000  & 0.2500 & 0.5000\\
   0.0347 & 0.1865 & -0.2500 & 0.5000 & 0.2500
   \end{bmatrix},\\
   \phi(x)&=\begin{bmatrix}
   0.1(\sin(x_3)-x_3)\\
   0.2(\sin(x_4)-x_4)\\
   0.3(\sin(x_1)-x_1)\\
   0\\
   0.1(\sin(x_2)-x_2)
   \end{bmatrix},
   B=\begin{bmatrix}
   0.7 & 0.8 & 0\\
   0.4 & 0.9 & 0.9\\
   0.9 & 0.9 & 0.2\\
   0.9 & 0.6 & 0.7\\
   0   & 0.5 & 0.3
   \end{bmatrix},\\
   D&=\begin{bmatrix}
   0.2 & 0.1 & 0\\
   0.2 & 0.1 & 0
   \end{bmatrix},
   C=\begin{bmatrix}
   0.5 & 0.2 & 0 & 0 & 0.3\\
   0   & 0.2 & 0.1 & 0.3 & 0
   \end{bmatrix}, V=I_2\\
   \psi(x)&=[0.1(\cos(x_1)-x_1) \ 0.2(\cos(x_2)+x_2)]^\top, W=I_5,
\end{align*}
where %$n=2$, %$ y_t \in \mathbb{R} $, 
$\mathcal{W}=[-.1 \ .1]^5,\mathcal{V}=[-.1 \ .1]^2$ and $\mathcal{X}_0 = [-6 \ 6]^5$. The nonlinear system is unstable. Moreover, matrix $A$ is unstable, which means the linear part of the system is itself unstable. The first plot in Figure \ref{fig:states} (i.e., in the upper left corner) shows the trajectory of all the states for the unstable open-loop system, where some of the state values grow unbounded as expected. Using Proposition \ref{prop:tight_decomp} we computed $F_{\phi}=\begin{bmatrix} \varepsilon_0 & 0 & 0.2 & 0 & 0  \\ 0 & \varepsilon_0 & 0.4 & 0 & 0\\0.6 &  0 & \varepsilon_0 & 0 & 0 \\ 0 & 0 & 0 & \varepsilon_0 & 0\\ 0 & 0.2 & 0 & 0 & \varepsilon_0  \end{bmatrix}, F_{\psi}=\begin{bmatrix} 0.2 & 0 \\ 0 & 0.4 \end{bmatrix}, \gamma=\|F_{\phi}\|_{\infty}=0.6010$, where $\varepsilon_0=0.001$ is a small number added to the diagonal entries of the original $F_{\phi}$ to make it an invertible matrix. We set $\alpha=0.1$ which results in $\epsilon =\frac{1}{\alpha \gamma^2}-1=26.6854$. Moreover, based on the separation principle results in Lemma \ref{lem:separation}, we fist obtained $L= \begin{bmatrix}
       0.1   &  0   &   0 & 0 & 0\\
    -20.0999 &  1    & 0 & 0 & 0 
    \end{bmatrix}$ 
    by solving the SDP in \cite[(17) and (19)]{9790824} employing YALMIP and MOSEK. Then, given $L$, we leveraged bisection and YALMIP again to solve the corresponding SDP in \eqref{eq:stabilizing_K}, which resulted in the following control gains $\overline{K}_* =-\underline{K}_*=\begin{bmatrix} -0.7686 & 0.6505 & 0 & 0 & 0.1 \\ 
-0.3008 & -1.2088 & 0 & 0 & 0.2 \\ 
0.3207 & -0.6403 & 0 & 0 & 0\end{bmatrix}, {K}_{\nu*} =\begin{bmatrix} 0.3343 & -0.3202 & 0 & 0 & -0.1\\ 
0.1004 & -0.1044 & 0 & 0 & -0.05\\ 
-0.1103 & 0.3201 & 0 & 0 & 0\end{bmatrix}$, as well as the optimal noise attenuation level $\mu_*=1.6137$. 
%was obtained.  
The second to sixth plots in Figure \ref{fig:states} %\mk{($x_1$ omitted for brevity)}
% The results of simulating the system are shown in Figures \ref{fig:figure2}, where as can be observed, 
% the true state values are within the \mk{both} framers 
 show the trajectories of the closed-loop stabilized framers as well as actual states, which as can be seen are asymptotically stable. %This can be confirmed from the closed-loop error dynamics shown in Figure \ref{fig:errors}.   

\begin{figure}[t!] % for sub figures over two columns in 
\centering
%{\includegraphics[width=0.49\columnwidth,trim=10mm 15mm 10mm 0mm]{Figures/F1_dis.eps}}\label{fig:sub4}
%\hfil
%{\includegraphics[width=0.49\columnwidth,trim=10mm 15mm 10mm 0mm]{Figures/F2_dis.eps}}\label{fig:sub5}
%\hfil
{\includegraphics[width=0.49\columnwidth]{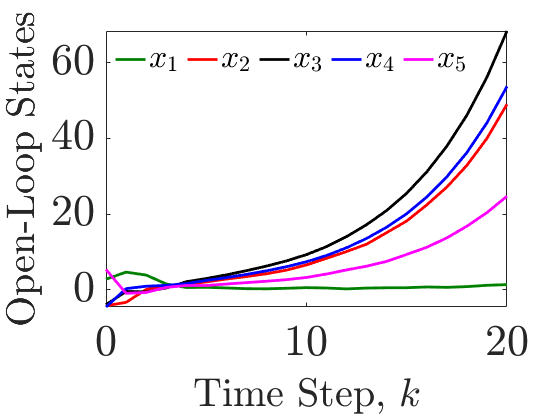}}
{\includegraphics[width=0.49\columnwidth]{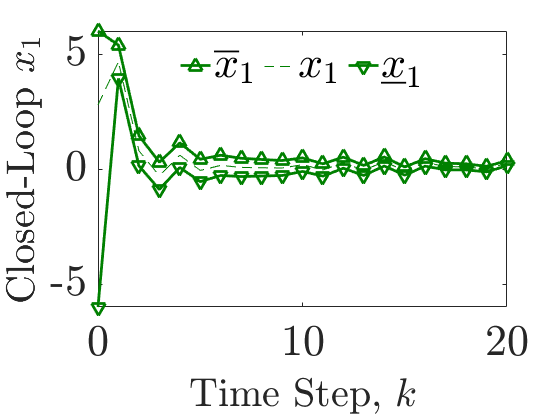}}
{\includegraphics[width=0.49\columnwidth]{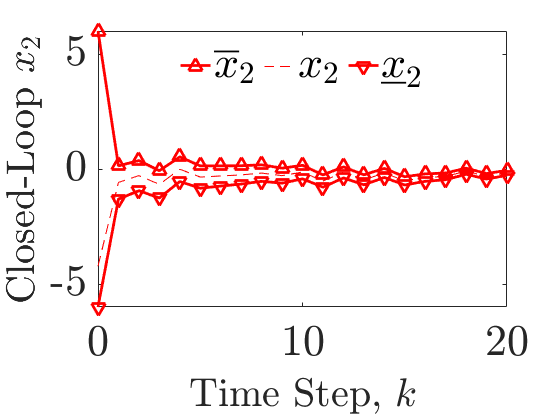}}
{\includegraphics[width=0.49\columnwidth]{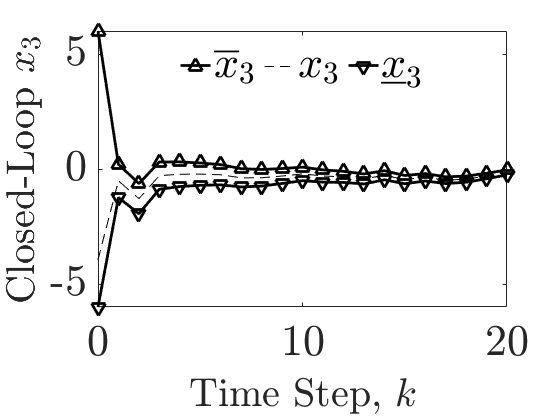}}
{\includegraphics[width=0.49\columnwidth]{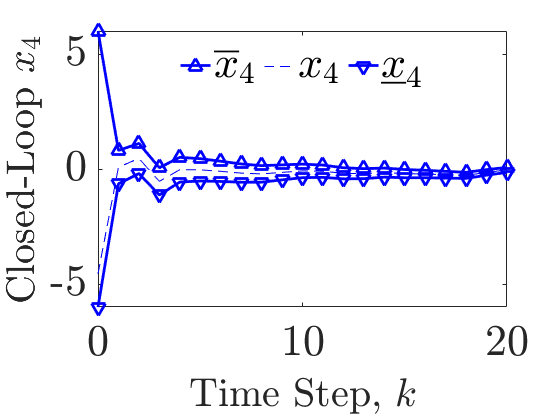}}
{\includegraphics[width=0.49\columnwidth]{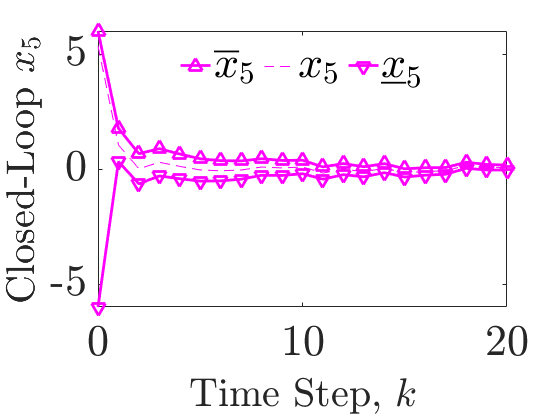}}
%{\includegraphics[scale=.155,trim=55mm 0mm 0mm 0mm]{Figures/states_6.png}}
%{\includegraphics[width=\columnwidth,trim=10mm 15mm 10mm 0mm]{Figures/F1_new.eps}}\label{fig:sub4}
%{\includegraphics[scale=.4,trim=10mm 15mm 10mm 0mm]{Figures/F1_new.eps}}\label{fig:sub4}
%\hfil
%{\includegraphics[width=0.48\columnwidth,trim=7mm 5mm 10mm 0mm]{Figures/error_Linf_DT.eps}}\label{fig:sub5}
%{\includegraphics[width=\columnwidth,trim=10mm 15mm 10mm 0mm]{Figures/F2_new.eps}}\label{fig:sub5}
%{\includegraphics[scale=.4,trim=10mm 15mm 10mm 0mm]{Figures/F2_new.eps}}\label{fig:sub5}
%\hfil
\caption{\small\mk{Open-loop states (first plot), as well as the closed-loop upper and lower framers and actual states (second to sixth plots), returned by our proposed control design.}}%{States, $x_1,x_2$, as well as upper and lower framers, \mk{returned by our proposed observer} $\overline{x}_1,\overline{x}_2,\underline{x}_1,\underline{x}_2$, \mk{and by the observer in \cite{tahir2021synthesis}, $\overline{x}^{TA}_2,\overline{x}^{TA}_3,\underline{x}^{TA}_2,\underline{x}^{TA}_3$} for the DT System example.}}
\label{fig:states}
\end{figure}
\section{Conclusion and Future Work} \label{sec:conclusion}
In this paper, optimal feedback stabilizing control for bounded Jacobian nonlinear discrete-time systems with nonlinear observations, subject to state and process noise, was addressed. Rather than stabilizing the uncertain system directly, a higher-dimensional interval observer was stabilized, with its states containing the actual system states. The nonlinear control approach provided additional flexibility compared to linear methods, compensating for system nonlinearities and allowing tighter closed-loop intervals. A separation principle was established, enabling the independent design of observer and control gains, and tractable LMIs were derived for control gain design. Extending this approach to switching dynamics and hybrid systems, as well as other control strategies such as model predictive control will be considered for future work.
\bibliographystyle{unsrturl}

\bibliography{biblio}

\end{document}